\newcolumntype{C}{>{$\displaystyle} c <{$}}
\newcolumntype{L}{>{\arraybackslash}m{8cm}}
\newtheorem{Theorem}{Theorem}
\newtheorem{Lemma}{Lemma}
\newtheorem{Proposition}{Proposition}
\newtheorem{Corollary}{Corollary}
\numberwithin{Theorem}{section}
\numberwithin{Corollary}{section}
\numberwithin{Definition}{section}
\numberwithin{Lemma}{section}
\numberwithin{Proposition}{section}
\numberwithin{Algorithm}{section}
\numberwithin{equation}{section}
\newcommand{\defeq}{\vcentcolon=}
\providecommand{\keywords}[1]{\textbf{Keywords:} #1}
\title{\vspace{-40pt}A characterisation of the  reconstructed birth-death process  through time rescaling}
\author{Anastasia Ignatieva\,\thanks{anastasia.ignatieva@warwick.ac.uk} \footnotemark[2] \footnotemark[3] ,  Jotun Hein\,\thanks{Department of Statistics, University of Oxford, 24-29 St Giles', Oxford OX1 3LB, UK} \footnotemark[5] , Paul A.\ Jenkins\,\thanks{Department of Statistics, University of Warwick, Coventry CV4 7AL, UK} \thanks{Department of Computer Science, University of Warwick, Coventry CV4 7AL, UK} \thanks{The Alan Turing Institute, British Library, London NW1 2DB, UK}}
\date{May 6, 2020}
\begin{document}

\captionsetup{width=0.8\textwidth}


\maketitle

\singlespacing
\begin{abstract}
The dynamics of a population exhibiting exponential growth can be modelled as a birth-death process, which naturally captures the stochastic variation in population size over time.  In this article, we consider a supercritical birth-death process, started at a random time in the past, and conditioned to have $n$ sampled individuals at the present.  The genealogy of individuals sampled at the present time is then described by the reversed reconstructed process (RRP), which traces the ancestry of the sample backwards from the present. We show that a simple, analytic, time rescaling of the RRP provides a straightforward way to derive its inter-event times. The same rescaling characterises other distributions underlying this process, obtained elsewhere in the literature via more cumbersome calculations. We also consider the case of incomplete sampling of the population, in which each leaf of the genealogy is retained with an independent Bernoulli trial with probability $\psi$, and we show that corresponding results for Bernoulli-sampled RRPs can be derived using time rescaling, for any values of the underlying parameters.  A central result is the derivation of a scaling limit as $\psi$ approaches 0, corresponding to the underlying population growing to infinity, using the time rescaling formalism. We show that in this setting, after a linear time rescaling, the event times are the order statistics of $n$ logistic random variables with mode $\log(1/\psi)$; moreover, we show that the inter-event times are approximately exponentially distributed. 
\end{abstract}

\hspace{7pt} \keywords{birth-death, reconstructed process, Bernoulli sampling, time rescaling}


\onehalfspacing

\section{Introduction}

The coalescent is a widely used model describing the genealogy of a sample taken from a population, arising as the scaling limit of numerous population models \citep{hei:etal:2005}. A key assumption of the basic coalescent is that the population size is large but constant or deterministically changing through time, although there are stochastic formulations \citep{kaj:kro:2003, par:etal:2010}. For some species, the dynamics of a population where individuals replicate and die independently of each other may be more naturally modelled as a birth-death process, which captures the stochastic nature and rapid growth of the population size \citep{boskova, stadlercoalescent}. The simple linear birth-death process (BDP) studied by \citet{kendalla} is a popular neutral population model, in which individuals independently divide at rate $\lambda$ and die at rate $\mu$. A realisation of this process can be represented as a tree relating the individuals, with bifurcations corresponding to birth events, and terminating branches corresponding to death events. The process models the entire population, creating a birth-death tree such as that shown on the left of Figure \ref{bd_trees}, where lineages can go extinct before the present. The genealogy of surviving individuals can then be obtained by pruning these extinct lineages, shown in the middle panel. The process tracing out the genealogy is termed the \emph{reconstructed process (RP)}  \citep{nee}. 

\citet{gernhard} considered the RP backwards in time, conditioning it on having $n$ extant individuals at the present and a given time of origin  $T$. \citet{gernhard} noted a correspondence between this conditioned reconstructed process and a point process with i.i.d.\ speciation times; this is termed a \emph{coalescent point process (CPP)} as introduced by \citet{aldouspopovic} for critical branching processes. With this formulation, and using the results of \citet{thompson}, \citet{gernhard} then derived the density of bifurcation times in the RP, conditioned on $T$. Then, using an improper uniform $(0, \infty)$ prior on $T$ and integrating, \citet{gernhard} obtained an expression for the density of the $k$-th bifurcation time.  In this article, we consider the time to origin to be random, similarly assuming a uniform prior on $T$, and condition on the sample size $n$ at the present. 

\begin{figure}[htbp!]
\centering
\includegraphics[height=7cm]{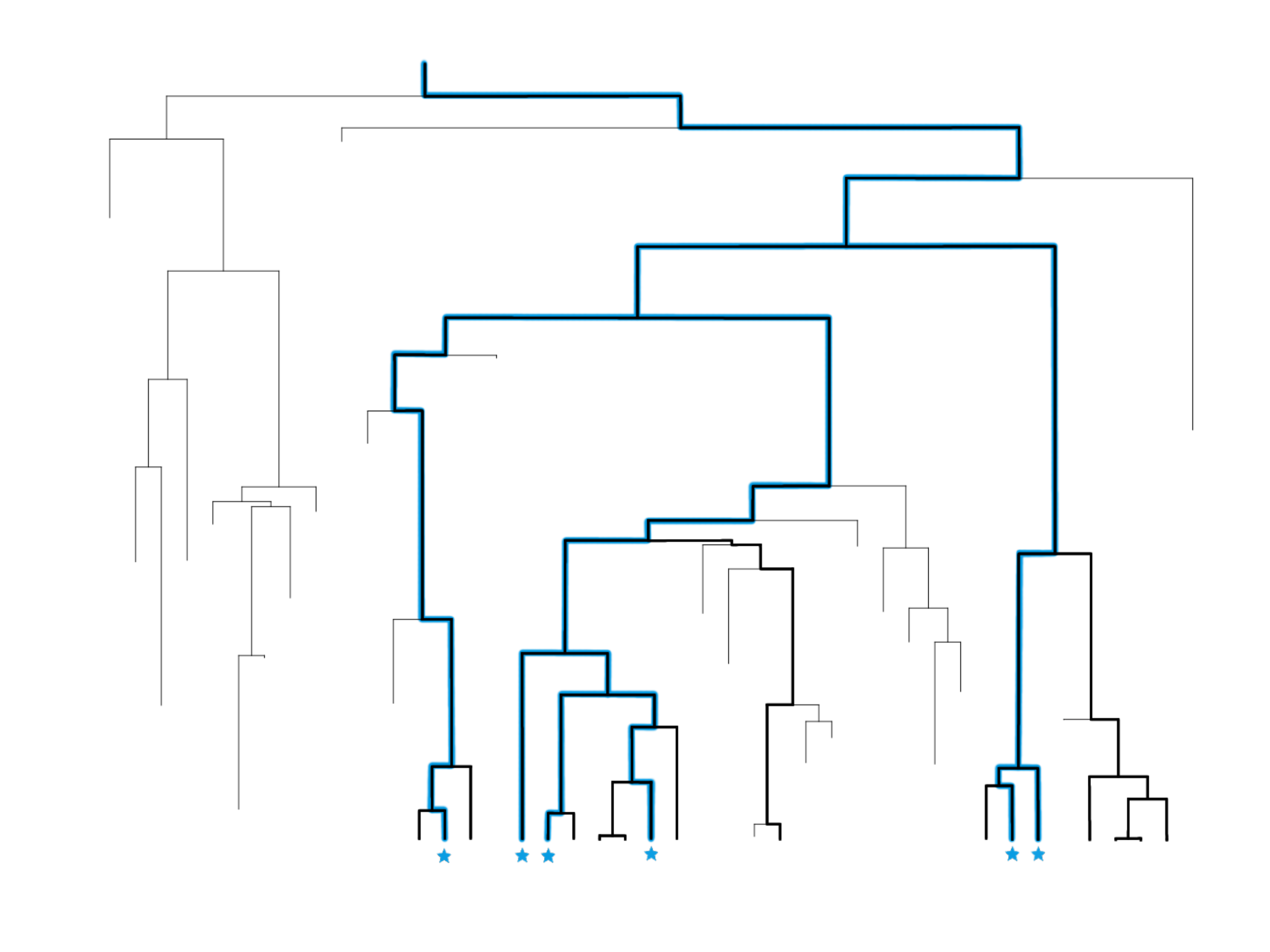}
\includegraphics[height=7cm]{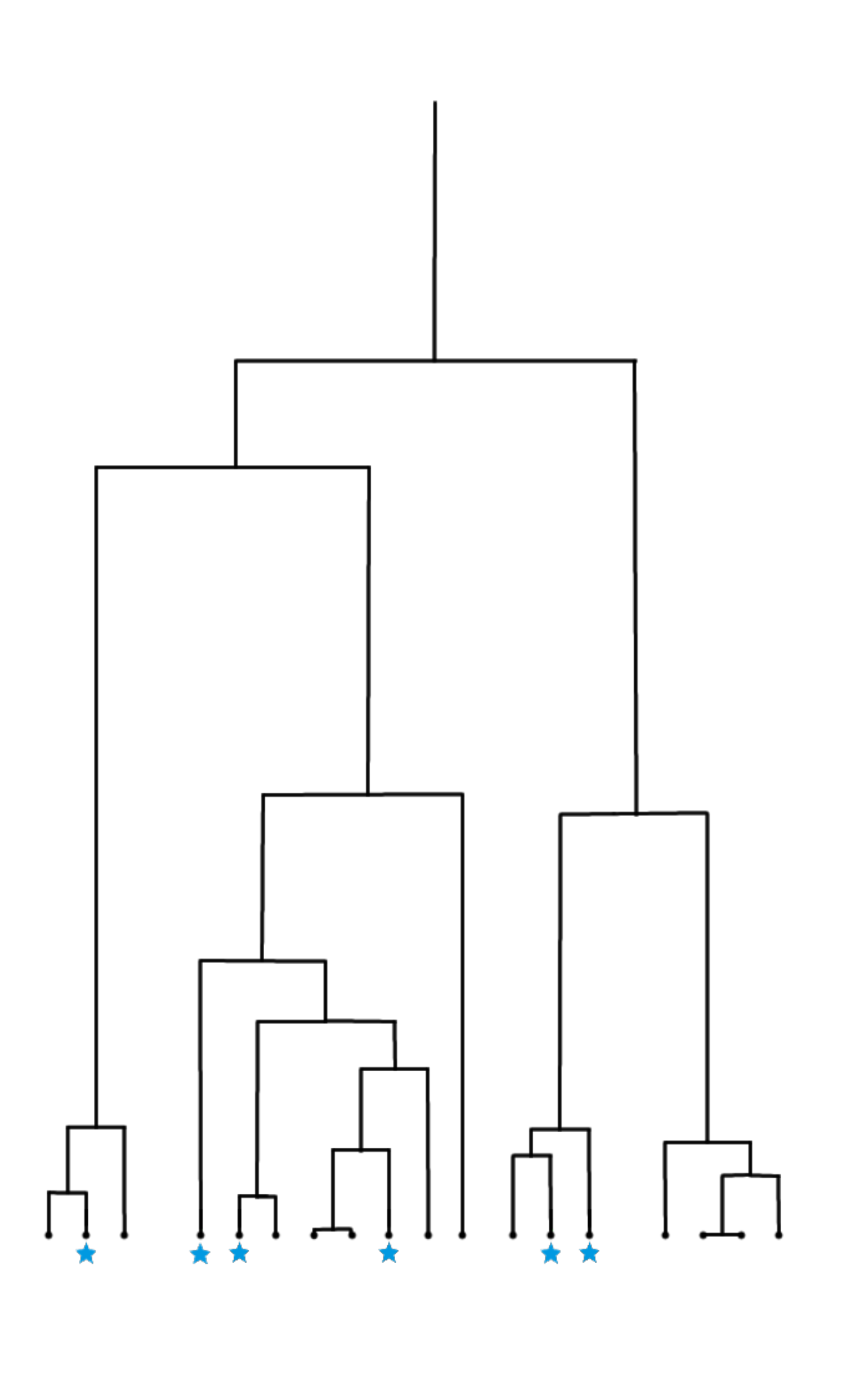}
\includegraphics[height=7cm]{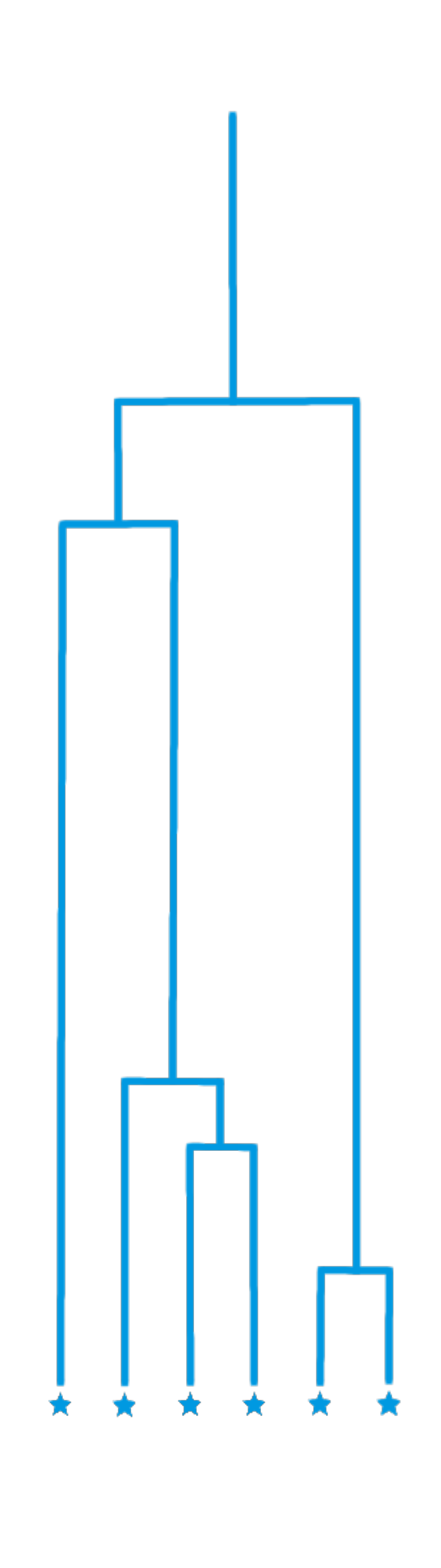}
\caption{Left: birth-death tree with $\lambda = 0.1, \mu = 0.05$ and 18 individuals surviving to present time. Middle: corresponding RRP with complete sampling. Right: RRP with incomplete sampling: each surviving individual is sampled independently with fixed probability $\psi = \frac{1}{3}$; blue stars indicate sampled individuals} \label{bd_trees}
\end{figure}

Birth-death models differ from the coalescent in that they must explicitly incorporate the sampling regime used in obtaining the sample. Two main sampling regimes have been considered in the literature: Bernoulli-type sampling (where each extant individual is sampled independently with some fixed probability $\psi$), and $n$-sampling ($n$ individuals are sampled from the full population, of size $N$ conditioned to be greater than $n$). \citet{stadlersampling} analysed the conditioned reconstructed process with Bernoulli sampling and derived the joint density of bifurcation times for the sample (conditioned on the time of origin, or with a uniform prior). \citet{wiuf_bd} and \citet{stadlerswapping} further looked at the correspondence between a complete and incompletely sampled process, by transforming the parameters. \citet{lambert} showed that there is a relationship between CPPs with Bernoulli sampling and CPPs with $n$-sampling: to simulate a CPP with $n$-sampling, one can first draw a realisation $y$ of a random variable with a specific density, and then simulate a CPP under Bernoulli sampling with $y$ as the sampling probability.  We focus on Bernoulli-type sampling for fixed values of $\psi$, and then consider the limit $\psi \to 0$, which corresponds to the underlying population size (in the complete tree) growing to infinity. 

 There is also a substantial related body of work concerning Bienaym\'e-Galton-Watson (BGW) processes, considered in either discrete or continuous time, in which individuals reproduce independently according to a specified offspring distribution (in continuous time, when the number of offspring is either zero or two, we obtain the special case of a birth-death process). Work on reduced trees (tracing the genealogy of a sample) goes back several decades; \citet{fleischmann} showed that the reduced tree associated with a BGW process is itself a time-inhomogeneous BGW process, similar to the result of \citet{nee} for the reconstructed birth-death process. Several papers have considered the question of coalescence times for a finite sample \citep[e.g.][]{oconnell,harris,grosjean,burden}. Similar to our treatment is that of \citet{oconnell}, who derives an expression for the coalescence time of a sample of size 2, as a fraction of the time since origin of the process; \citet{harris} generalise these results to any sample size, and consider continuous-time BGW processes sampled after time $T$, with $n$-sampling. Our work differs from the latter in the treatment of time to origin -- rather than assuming that sampling happens a fixed time after the origin of the process, we treat the time of origin as random. Moreover, we consider Bernoulli type sampling, so the connection between our results is limited to the setting of taking the sampling probability to 0, which is conceptually similar to taking $T$ to infinity. We focus specifically on results for birth-death processes, with any sample size, while \citet{harris} limit the exposition of their results explicitly applied to birth-death processes in the limit $T \to \infty$ to $n=2$.   Our treatment of time to origin is more similar to the work of \citet{burden}, who consider the infinite-population limit of near-critical Galton-Watson process, arriving at the Feller diffusion; the time of most recent common ancestor is treated as random. However, we do not consider the diffusion approximation, and focus on supercritical rather than near-critical processes. 


\subsection{The birth-death process (BDP)}

Consider a birth-death process $\mathcal{B}$ with birth rate $\lambda$ and death rate $\mu \neq \lambda$ (which we will shorten as BDP($\lambda, \mu$)), with time measured in units $\beta$. The process starts with one individual at time 0 and is run for time $T$ since origin, at which point all $n$ extant individuals are sampled. We assume a uniform (improper) prior on $T$, reiterating that this choice of prior is not novel, and has been treated, for instance, in \citet{aldouspopovic} for the critical case $\lambda = \mu$, and \citet{gernhard} and \citet{wiuf_bd} for the supercritical case. In this section, we give calculations showing that with this choice of improper prior, after conditioning on the number of sampled individuals $n$, the time since origin $T_n^*$ of the conditioned process is random with a particular, proper distribution. We then show that the BDP($\lambda, \mu$) sampled a time $T_n^*$ since origin and conditioned to have $n$ sampled individuals is dual to the BDP($\mu, \lambda$), initialised with $n$ individuals and run until first hitting state $0$. This is not a new result, but it is crucial to the idea of considering the reconstructed process backwards in time from sampling, so we include it for completeness.

\subsubsection{Prior on the time of origin}
Let ${\mathcal{B}}_n$ denote the process ${\mathcal{B}}$ conditioned to have $n$ sampled individuals, and denote by ${\mathcal{B}}_{n,t}$ this process with the sampling step happening at time $t$ since origin. Here we will consider both the subcritical ($\mu > \lambda$) and supercritical ($\lambda > \mu$) cases. Let $N(t)$ denote the number of individuals alive in  $\mathcal{B}_n$ at time $t$ since origin. Then the generating function of $N(t)$ is given by \citep[Chapter III, Section 5]{athreya_ney}:
\[
G(s) = \mathbb{E}(s^{N(t)}) = \frac{\mu (s-1) e^{(\lambda - \mu) t} - \lambda s + \mu}{\lambda (s-1) e^{(\lambda - \mu)t} - \lambda s + \mu}.
\]
Then
\[
p_t \defeq \mathbb{P} (N(t) = 0) = G(0) = \frac{\frac{\mu}{\lambda - \mu} (e^{(\lambda - \mu)t} - 1)}{1 + \frac{\lambda}{\lambda - \mu} (e^{(\lambda - \mu)t} - 1)},
\]
and 
\[
\mathbb{P} (N(t) = j) = (1-p_t) \left( 1 - \frac{\lambda}{\mu} p_t \right)p_t^{j-1}.
\]
Analogously to \citet[Section 2]{aldouspopovic}, define the probability measure
\begin{equation} \label{p_measure}
\mathbb{P}^* (\mathcal{B}_n \in \cdot) \defeq \frac{\int_0^\infty \mathbb{P}(\mathcal{B}_{n,t} \in \cdot) \mathbb{P}(N(t) = n) dt}{\int_0^\infty \mathbb{P}(N(t) = n) dt}.
\end{equation}
Making the observation that $\mathbb{P}(N(t) = j) = \frac{1}{\mu} \left(\frac{d}{dt} p_t \right) (p_t)^{j-1} = \frac{1}{j \mu} \frac{d}{dt}(p_t^j)$, we obtain
\begin{equation*}
\int_0^\infty \mathbb{P}(N(t) = n) \, dt = \frac{1}{n \mu} \left[ p_t^n \right]_0^\infty  =
\begin{cases}
\frac{1}{n \mu} &\text{if $\mu > \lambda$} \\
\frac{1}{n \mu} \left( \frac{\mu}{\lambda} \right) ^n &\text{if $\lambda > \mu$}.
\end{cases}
\end{equation*}
Then the function
\begin{equation} \label{origin_lambda_mu}
f_{T_n^*}(t) = \frac{\mathbb{P}(N(t) = n)}{\int_0^\infty \mathbb{P}(N(t) = n) dt} = 
\begin{cases}
\frac{n \mu e^{(\mu - \lambda)t} \left[ \frac{\mu}{\mu - \lambda} (e^{(\mu - \lambda)t} - 1) \right]^{n-1}}{\left[ 1 + \frac{\mu}{\mu - \lambda} (e^{(\mu - \lambda)t} - 1)\right]^{n+1}} &\text{if $\mu > \lambda$} \\[3ex]
\frac{n \lambda e^{(\lambda - \mu)t} \left[ \frac{\lambda}{\lambda - \mu} (e^{(\lambda - \mu)t} - 1) \right]^{n-1}}{\left[ 1 + \frac{\lambda}{\lambda - \mu} (e^{(\lambda - \mu)t} - 1)\right]^{n+1}} & \text{if $\lambda > \mu$}
\end{cases}
\end{equation}
is a probability density on $[0, \infty)$ for the time since origin $T_n^*$, and \eqref{p_measure} can be rewritten as
\[
\mathbb{P}^* (\mathcal{B}_n \in \cdot) = \int_0^\infty f_{T_n^*}(t) \; \mathbb{P}(\mathcal{B}_{n,t} \in \cdot) dt.
\]
Thus, after conditioning on the sample size $n$, $f_{T_n^*}(t)$ is a proper density for $T_n^*$.

\subsubsection{Time reversal} \label{bdp_reversal}

The population size of the BDP$(\lambda, \mu)$ is a continuous-time Markov chain with the transition rates
\[
q_{i, i+1} = \lambda i, \;\; q_{i, i-1} = \mu i.
\]
As above, denote by $\{ N(t), \; 0 \leq t \leq T_n^* \}$ the corresponding process associated with the complete tree, counting the population size up to the time of sampling, making the jump from 0 to 1 at time $0$. Consider also the continuous-time Markov chain $\{ \widehat{N}_n(s), \; 0 \leq s \leq T_n\}$, which has the reversed transition rates
\[
q_{i, i+1} = \mu i, \;\; q_{i, i-1} = \lambda i,
\]
started in state $\widehat{N}_n(0) = n$ and run until the first hitting time $T_n$ of state 0. Then, mirroring \citet[Lemma 2]{aldouspopovic} for the critical case, we have the following:
\begin{Lemma} \label{time_reverse_lemma}
\[
\{ N(T_n^* - s), \; T_n^* \geq s \geq 0\} \stackrel{d}{=} \{ \widehat{N}_n(s), \; 0 \leq s \leq T_n\},
\]
and in particular $T_n^* \stackrel{d}{=} T_n$, where $\stackrel{d}{=}$ denotes equality in distribution.
\end{Lemma}

\begin{proof}
Fix event times $s_0, \hdots, s_M$, with $s_M > s_{M-1} > \hdots s_1 > s_0 = 0$, and positive integers $k_M = 1, k_{M-1}, \hdots, k_2, k_1 = n$, with $|k_m - k_{m-1} = 1|$ and setting $k_{M+1} = 0$. The sequence of $k_m$'s describes a population size trajectory of a realisation of the birth-death process; reading from left to right, this has $b+1$ increases of size 1, and $n-1+b$ decreases of size 1 for some integer $b \geq 0$ with $n+2b = M$. Then the event 
\begin{quote}
\{as $s$ decreases, $N(T_n^*-s)$ jumps from $k_{m+1}$ to $k_m$ for $s \in [s_m, s_m + ds_m]$ (for all $M \geq m \geq 1$) and makes no other jumps\}
\end{quote}
has measure
\begin{equation} \label{meas_1}
ds_M \cdot \prod_{m=M}^2 \left( e^{-k_m(\lambda + \mu)(s_m-s_{m-1})} \; k_m \, ds_{m-1} \right) \cdot  \lambda^{b+1} \; \mu^{n-1+b}\cdot e^{-k_1s_1},
\end{equation}
where $ds_M$ comes from the uniform prior. For the reversed process $\widehat{N}_n(s)$, the event
\begin{quote}
\{as $s$ increases, $\widehat{N}_n(s)$ jumps from $k_{m}$ to $k_{m+1}$ in the interval $s \in [s_m, s_m + ds_m]$ (for all $1 \leq m \leq M$) and makes no other jumps\}
\end{quote}
has probability
\begin{equation} \label{meas_2}
\prod_{m=1}^M \left( e^{-k_m(\lambda + \mu)(s_m-s_{m-1})} \; k_m ds_{m} \right) \cdot \mu^{n-1+b} \;  \lambda^{b+1},
\end{equation}
as reading the sequence of $k_m$'s from right to left, there are $n-1+b$ increases of size 1 and $b+1$ decreases of size 1. The measure \eqref{meas_1} is $1/k_1 = 1/n$ times \eqref{meas_2}, so after conditioning the probability measures of the two events are equal. 
\end{proof}

This demonstrates the duality between the BDP$(\lambda, \mu)$ started with 1 individual at time 0 and reaching $n$ individuals after the random time $T_n^*$ (running ``forwards" to the time of sampling), and the BDP$(\mu, \lambda)$, started from $n$ individuals at time 0  and run until it reaches state 0 (running ``backwards in time" from the sample). We next consider the reconstructed process, which tracks the genealogy of only the sampled individuals, and make use of the duality between the forwards-in-time and backwards-in-time formulations.

\subsection{The reversed reconstructed process (RRP)} \label{rrp_intro}

 The RP (forwards in time) describes the number of lineages in the BDP($\lambda, \mu$), which will have at least one surviving descendant in the sample.  \citet{nee} identified that the RP forwards in time is generated by an underlying time-inhomogeneous pure birth process, with birth rate per lineage at time $t$ given by:
\begin{align}
\lambda P_1(t, T) &\defeq \lambda \cdot \mathbb{P}\text{(a single lineage born at time $t$ is not extinct by time $T$)} \nonumber \\
&= \frac{\lambda (\lambda - \mu)}{\lambda - \mu e^{-(\lambda - \mu)(T-t)}}  \nonumber \\
&= \frac{\lambda e^{(\lambda - \mu)(T-t)}}{1 + \frac{\lambda}{\lambda - \mu} (e^{(\lambda - \mu)(T-t)}-1)}, \label{nee_rate}
\end{align}
where $T$ is the time of sampling and $P_1(t, T)$ is given by \citet{kendalla}.  The state of the process at time $t$ is the number of individuals alive at $t$ with at least one descendant at the time of sampling $T$, with events corresponding to transitions from state $j$ to $j+1$, $j \geq1$.

It is advantageous to consider the process running backwards in time from the present, conditioning on the sample size $n$, and not explicitly conditioning on the time of origin of the process (which is generally unknown,  and for which we impose a uniform improper prior). In line with the discussion in Section \ref{bdp_reversal}, we will thus consider the properties of the \emph{reversed reconstructed process (RRP)}, which is defined as the process tracking the genealogy of the initial population of the BDP($\mu, \lambda$), initialised at $n$ individuals and run until the first hitting time of state 0. It is straightforward to show, similarly to Lemma \ref{time_reverse_lemma}, that the RP with birth rate \eqref{nee_rate} run for time $T_n^*$ and reaching state $n$ at the time of sampling is dual to the RRP which is started in state $n$ at time 0 and stopped at the first hitting time of state 0, with death rate obtained by replacing $T-\tau$ by $\tau$ in \eqref{nee_rate} to account for the time reversal. Note that the time index $\tau$ increases into the past, and $\tau=0$ denotes the time of sampling. The RRP is thus an inhomogeneous pure-death process, with death rate per lineage given by:
\begin{equation} \label{BDP1_m}
m_\beta(\tau) = \frac{\lambda e^{(\lambda - \mu)\tau}}{1 + \frac{\lambda}{\lambda - \mu} (e^{(\lambda - \mu)\tau}-1)}.
\end{equation}

To obtain the death rate of the RRP with Bernoulli sampling (where each lineage is sampled with a fixed probability $\psi$ at time 0), measured in time units of $\gamma$, we replace $P_1(\tau, T)$ with the relevant probability  $P_\psi(t, T)$ as derived by \citet{yangrannala}:
\begin{equation*}
P_\psi(t, T) = \frac{\psi (\lambda - \mu)}{\psi \lambda - (\mu - (1 - \psi) \lambda) e^{-(\lambda - \mu)(T-t)}} = \frac{\psi e^{(\lambda - \mu)(T-t)}}{1 + \frac{\psi \lambda}{\lambda - \mu} \Big( e^{(\lambda - \mu)(T-t)} - 1 \Big)},
\end{equation*}
which,  following the same reasoning as for the case of complete sampling,  gives the RRP death rate:
\begin{equation} \label{RRP_rate}
m_\gamma(\tau) = \frac{\psi\lambda e^{(\lambda - \mu)\tau}}{1 + \frac{\psi\lambda}{\lambda - \mu} (e^{(\lambda - \mu)\tau}-1)}.
\end{equation}

Note that for the case of a subcritical process (with $\lambda < \mu$), the population process backwards in time is supercritical. To ensure that the population reaches a common ancestor, we thus need to condition this process on ultimate extinction; it can be shown that this is equivalent to swapping the birth and death rate \citep{waugh},  indeed this is clear from \eqref{origin_lambda_mu} for the time of origin.  Thus, the RRP death rate in the subcritical case will be the same as \eqref{RRP_rate} but with $\lambda$ and $\mu$ interchanged. 

For the case of a critical branching process, measured in time units of $\alpha$, the death rate is given by taking the limit $\lambda \to \mu$ in \eqref{RRP_rate}:
\begin{equation*}
m_\alpha(\tau) = \frac{\psi \lambda}{1 + \psi \lambda \tau}.
\end{equation*}

\subsection{Overview}

In this paper we consider the RRP as a backwards in time inhomogeneous pure-death process, as described in Section \ref{rrp_intro} above. We show that properties of the RRP are easily derived using this formulation. We use this to re-derive several results, such as densities of event times, which have been given elsewhere in the literature, but stress that the resulting proofs are significantly simpler and more intuitive. 

Noting that there is a time rescaling between a  time-reversed  Yule rate 1 process and the RRP of the birth-death population model, we propose a new simulation algorithm for (incompletely) sampled RRPs using time rescaling. This is an alternative to existing algorithms \citep{hartmannsimulating, stadlersimulating}, which instead utilise a coalescent point process (CPP) formulation. We discuss the relationship between these two approaches. Further, we demonstrate the relationship between completely and incompletely sampled RRPs through time rescaling. In related work, e.g.\ \citet{stadlersteel}, the approach taken of transforming birth and death rates meant that results could be derived only for a restricted set of parameter values, in particular for $1-\psi \leq \mu/\lambda \leq 1$; this is especially restrictive when $\psi$ is small. Here, we  show instead that the completely and incompletely sampled RRPs are time-rescaled versions of each other, so distributions for the incompletely sampled case can be derived using a change of variables. We use this to complete the proof for the length of a randomly chosen pendant edge in \citet{stadlersteel} for all parameter values.

Next we consider the scenario in which the underlying population size in a birth-death process grows to infinity, but a finite sample of size $n$ is obtained. This can be thought of as taking the limit $\psi \to 0$ for the Bernoulli sampling probability; we discuss the connection with the limit as the total population size tends to infinity for $n$-sampling, using results of \citet{lambert}. We describe in detail the time transformation between the RRP in this setting to a  time-reversed  Yule rate 1 process; in this scenario, there are two distinct timescales, separating the time of the first event from the events nearer the root of the tree. The RRP tree becomes star-shaped: the terminal branch lengths tend to infinity, while the inter-event times at the top of the tree are approximately exponentially distributed on a shorter timescale, with rate depending on the remaining number of lineages. We then use the time rescaling formalism to derive, analytically, the density of the inter-event times, both for any $\psi \in (0,1]$ and in the limit $\psi \to 0$; both results are new to the best of our knowledge.  We then show that in the limit $\psi \to 0$, the event times are distributed as the order statistics of $n$ logistic random variables, with mode $\log(1/\psi)$ (after a simple, linear, time rescaling). Further, we show that the inter-event times (thus distributed as the spacings between consecutive order statistics of $n$ logistic random variables)  are approximately exponentially distributed, with error bounded by $1/n$ in terms of Kolmogorov-Smirnov distance. We also show that the expectation of inter-event times agrees exactly with the expectation under this approximation.

The paper is structured as follows. In Section \ref{notation} below, we introduce the notation used throughout. In Section \ref{theory}, we state several known results for inhomogeneous birth-death processes which we will rely on, and review time rescaling for these processes. In Section \ref{BDP_section}, we consider the RRP of birth-death processes with Bernoulli sampling. In Section \ref{psi_0_section}, we focus on the limit of the sampling probability going to 0. Finally, discussion is presented in Section \ref{discussion_section}. Proofs can be found in Appendix \ref{proofs_appendix}. 

Illustrations of trees throughout were made using the R package \texttt{ape} \citep{ape}.


\subsection{Notation} \label{notation}

Table \ref{table1} summarises the notation used throughout. For instance, $\text{BDP}(\lambda, \mu, \psi)$ denotes the birth-death population process where each individual divides independently with rate $\lambda$, dies independently with rate $\mu<\lambda$, with the rates measured in time units $\gamma$; at time 0, each surviving individual is sampled with a fixed probability $\psi$. The corresponding RRP, i.e.\ the process tracing out the genealogy of the sample from this population backwards in time from 0, is denoted by $X_\psi^\gamma \defeq (X_\psi^\gamma(\tau): \tau \geq 0)$.  We write $X_\psi^{\xi}$ to denote the same process, but with time rescaled to units of $\xi = g(\gamma)$ for some time transformation $g$, i.e.\  $X_\psi^\xi(g(\tau)) = X_\psi^\gamma(\tau)$.  We denote the death rates of $X_\psi^\gamma$ and $X_\psi^{\xi}$ by $m_\gamma$ and $m_\xi$, respectively, with the subscripts denoting the time scale on which the rates are measured.

\begin{table}[htbp!] 
\centering 
\begin{tabular}{L c l c} 
\hline
\bf{Population process}                                                      & \bf{Time unit}                & \bf{Notation}                        & \bf{RRP notation} \\ \hline
Yule process, 

birth rate $1$                              & $t$                         & $\text{Yule}(1)$           &  $Y$  \\ \hline
Critical branching process, 

birth = death rate $\lambda$, sampling probability $\psi$ & $\alpha$   & $\text{CBP}(\lambda, \psi)$      & $Z_\psi^\alpha$   \\ \hline
Birth-death process, 

birth rate $\lambda$, death rate $\mu$, complete sampling    & $\beta$       & $\text{BDP}(\lambda, \mu, 1)$    & $X_1^\beta$  \\ \hline
Birth-death process, 

birth rate $\lambda$, death rate $\mu$, sampling probability $\psi$ & $\gamma$ &$\text{BDP}(\lambda, \mu, \psi)$ & $X_\psi^\gamma$  \\ \hline

Birth-death process, 

birth rate $\lambda'$, death rate $\mu'$, with $\lambda' - \mu' = 1$

sampling probability $\psi$ & $\delta$ &$\text{BDP}(\lambda', \mu', \psi)$ & $X_\psi^\delta$  \\ \hline

\end{tabular}
\caption{Summary of notation}\label{table1}
\end{table}

A table summarising the properties of each RRP is given in Appendix \ref{appB} for reference.


\section{Background} \label{theory}

We briefly review relevant known results which we will rely on throughout the paper.

\subsection{Inhomogeneous  pure-death  processes}
We briefly state relevant known results concerning inhomogeneous  pure-death  processes. Consider a time-inhomogeneous pure-death process, with time measured in units $\xi$, starting with $n$ individuals alive at time $0$. Each individual dies independently at rate $m_\xi(\tau)$; if there are $j$ individuals at time $\tau$, the intensity is $j m_\xi(\tau)$. The rate function of the process is given by 
\begin{equation*} \label{rho}
\rho_\xi(\tau) = \int_0^\tau m_\xi(x) dx.
\end{equation*}
The transition probabilities, i.e.\ the probability of going from $n$ to $j$ individuals in time $\tau$, are given by a binomial distribution \citep[p.112]{bailey}:
\begin{equation} \label{transitions}
P_{nj}(\tau) = 
\begin{cases}
{n \choose j} \big(1 - e^{- \rho_\xi(\tau)} \big)^{n-j} \big(e^{{-\rho_\xi(\tau)}}\big)^j &\text{for $j \leq n$},\\
0 &\text{ otherwise,}
\end{cases}
\end{equation}
with $e^{-\rho_\xi(\tau)}$ being the probability that a lineage has not died by time $\tau$. The distribution of time to origin is \citep[p.112]{bailey}:
\begin{equation} \label{origin_cdf}
F_{T_n}(\tau) = P(T_{n} < \tau) = \big( 1 - e^{-\rho_\xi(\tau)} \big) ^n,
\end{equation}
and, by differentiating, the pdf is
\begin{align} \label{origin_pdf}
f_{T_n}(\tau) &=  n m_\xi(\tau) e^{-\rho_\xi(\tau)}\big( 1 - e^{-\rho_\xi(\tau)} \big) ^{n-1}.
\end{align}
The density of the time of the $k$-th event is given by
\begin{align}
f_{T_k}(\tau) &= {n \choose k} \cdot \underbrace{k m_\xi(\tau) e^{-\rho_\xi(\tau)} \big(1 - e^{-\rho_\xi(\tau)}\big)^{k-1}}_{\text{$k$-th lineage dies at $\tau$}} \;\;\; \cdot \underbrace{\big(e^{-\rho_\xi(\tau)}\big)^{n-k}}_{\text{$n-k$ survive for at least $\tau$}} \nonumber \\
&= {n \choose k} \;k \; m_\xi(\tau) \big( 1 - e^{-\rho_\xi(\tau)} \big)^{k-1}  \big(e^{-\rho_\xi(\tau)} \big)^{n-k+1}. \label{wait_pdf}
\end{align}


\subsection{Time rescaling} \label{timerescaling}

Consider a pure-death inhomogeneous process with death rate $m_\xi(\tau)$, with time measured in units of $\xi$. Suppose that time is rescaled in units of $\zeta = g(\xi)$, where $g$ is strictly monotonic and differentiable. The death rate of the process then becomes, using a change of variables:
\begin{equation*} \label{change_of_vars}
m_\zeta(\tau) = m_\xi(g^{-1}(\tau)) \abs{\frac{d}{d \tau} \, g^{-1}(\tau)}.
\end{equation*}

The time rescaling theorem, due to \citet{tr1} and \citet{tr2}, states that any inhomogeneous point process with an integrable intensity function can be rescaled to a Poisson process with unit rate. The RRP can be thought of as a point process, with intensity given by its inhomogeneous death rate times the number of lineages. If the RRP (of any population process) has death rate $m_\xi(\tau)$, then rescaling time via the transformation $g = \rho_\xi$ rescales the RRP to a homogeneous pure-death process with death rate per lineage equal 1 (a  time-reversed  Yule rate 1 process).


\subsection{ Time-reversed  Yule rate 1 process}

 We define the time-reversed Yule rate 1 process as  a pure death process where each lineage dies independently at rate 1,  denoted $Y$. This is the RRP of a forwards-in-time Yule process with birth rate 1.  The inter-event time  during which there are exactly $j$ lineages  is exponentially distributed with rate $j$. Using \eqref{origin_pdf}, the time to origin has density:
\begin{equation*}
f_{T_n}(\tau) = n e^{-\tau} (1 - e^{-\tau})^{n-1},
\end{equation*}
and using \eqref{wait_pdf}, the time to $k$-th event has density:
\begin{equation} \label{pd1_kthevent}
f_{T_k}(\tau) = {n \choose k} \; k \; \big(1 - e^{-\tau} \big)^{k-1} \big(e^{-\tau} \big)^{n-k+1}.
\end{equation}
The expectation of time to origin is $\sum_{j=1}^n \frac{1}{j}$. These results are identical to those derived by \citet{gernhardcbp}.


\section{Birth-death process with Bernoulli sampling} \label{BDP_section}

We now consider in detail the RRP $X_\psi^\gamma$ of a supercritical birth-death process. Using the formulation introduced in Section \ref{rrp_intro}, we first re-derive some known properties of the process, which will be readily available from the results given in Section \ref{theory}. Then, using the fact that the RRP $X_\psi^\gamma$ is a time rescaling of  the RRP associated with  a Yule rate 1 process, we propose a simulation algorithm. Finally, we discuss the relationship between completely and incompletely sampled RRPs through time rescaling. 


\subsection{Properties of the process}

Set $T_0 = 0$ and for $k \in \{ 1, \hdots, n \}$ denote by $T_k$ the time of the $k$-th event, backwards from the present time 0. At $T_k$, the number of lineages decreases from $n-k+1$ to $n-k$. For $k \in \{0, \hdots n-1 \}$, let $W_k \defeq T_{k+1} - T_k$ denote the inter-event time.

\subsubsection{Transition probabilities and densities of event times} \label{RRP_props}

We use the pure-death process formulation of $X_\psi^\gamma$ to derive distributions characterising this process. The transition probabilities are, using \eqref{transitions}:
\begin{equation*}
P_{ij}(\tau) = 
\begin{cases}
{i \choose j} \big(1 - e^{- \rho_\gamma(\tau)} \big)^{i-j} \big(e^{{-\rho_\gamma(\tau)}}\big)^j &\text{for $j \leq i$}\\
0 &\text{otherwise,}
\end{cases}
\end{equation*}
where, by integrating the death rate in \eqref{RRP_rate}:
\begin{equation} \label{rho_gamma}
\rho_\gamma(\tau) = \int^\tau_0 m_\gamma(x) dx = \log(1 + \frac{\psi \lambda}{\lambda - \mu} \Big( e^{(\lambda - \mu)\tau} - 1 \Big)),
\end{equation}
and
\begin{equation*}
e^{-\rho_\gamma(\tau)} = \frac{1}{1  + \frac{ \psi \lambda}{\lambda - \mu} (e^{(\lambda - \mu)\tau} - 1)}.
\end{equation*}
For $\tau \to \infty$ and fixed $\psi \in (0,1]$, we have $\rho_\gamma(\tau) \to \infty$ and $e^{-\rho_\gamma(\tau)}  \to 0$, so $P_{ij}(\tau) \to 0$ for all $j \neq 0$, and $P_{i0}(\tau) \to 1$.  This implies that two individuals sampled at the present will eventually find a common ancestor in the past with probability 1.

The distribution of time to origin, using \eqref{origin_cdf}, is given by:
\begin{align*}
F^\psi_{T_n}(\tau) = P(T_{n}< \tau) = \big( 1 - e^{-\rho_\gamma(\tau)} \big) ^n &=  \Bigg( \frac{\frac{ \psi \lambda}{\lambda - \mu} (e^{(\lambda - \mu)\tau} - 1)}{1  + \frac{ \psi \lambda}{\lambda - \mu} (e^{(\lambda - \mu)\tau} - 1)} \Bigg)^n,
\end{align*}
and its density, using \eqref{origin_pdf}, is
\begin{align}
f^\psi_{T_n}(\tau) &=  n \cdot \frac{\psi\lambda e^{(\lambda - \mu)\tau}}{1 + \frac{\psi\lambda}{\lambda - \mu} (e^{(\lambda - \mu)\tau}-1)} \cdot \frac{1}{1  + \frac{ \psi \lambda}{\lambda - \mu} (e^{(\lambda - \mu)\tau} - 1)}  \cdot \Bigg( \frac{\frac{ \psi \lambda}{\lambda - \mu} (e^{(\lambda - \mu)\tau} - 1)}{1  + \frac{ \psi \lambda}{\lambda - \mu} (e^{(\lambda - \mu)\tau} - 1)}  \Bigg)^{n-1} \nonumber \\
&=  n \psi \lambda e^{(\lambda - \mu)\tau} \frac{ \Big[ \frac{\psi \lambda}{\lambda - \mu}  \big(e^{(\lambda - \mu)\tau} - 1 \big) \Big]^{n-1} }{\Big[1 + \frac{\psi \lambda}{\lambda - \mu}\big( e^{(\lambda - \mu)\tau} - 1 \big)\Big]^{n+1}} \label{pdf_psi}.
\end{align}
 Note that this agrees with \eqref{origin_lambda_mu} for the case $\psi = 1$.  This result is also obtained in \citet[Lemma 3.1]{stadlersampling}. We note that although the outcome is identical, the derivation given above is significantly simpler, and follows directly from the properties of the RRP as a stochastic process. In particular, the distribution function is immediately obtained from knowing the death rate; moreover, to obtain the pdf we do not need to integrate over the prior for the time of origin, as this is implicit in the time reversal.

Using \eqref{wait_pdf}, the waiting time to the $k$-th event is given by:
\begin{align}
f^\psi_{T_k}(\tau) &= {n \choose k} \;k \; m_\gamma(\tau) \big( 1 - e^{-\rho_\gamma(\tau)} \big)^{k-1}  \big(e^{-\rho_\gamma(\tau)} \big)^{n-k+1} \nonumber \\
&= {n \choose k} 
\; k \; 
\frac{\psi\lambda e^{(\lambda - \mu)\tau}} {1 + \frac{\psi\lambda}{\lambda - \mu} (e^{(\lambda - \mu)\tau}-1)}
 \Bigg( 
 \frac{\frac{ \psi \lambda}{\lambda - \mu} (e^{(\lambda - \mu)\tau} - 1)}{1  + \frac{ \psi \lambda}{\lambda - \mu} (e^{(\lambda - \mu)\tau} - 1)}
 \Bigg)^{k-1} 
 \Bigg( 
\frac{1}{1  + \frac{ \psi \lambda}{\lambda - \mu} (e^{(\lambda - \mu)\tau} - 1)}
 \Bigg)^{n-k+1} \nonumber \\
&= {n \choose k} \; k \; \psi \lambda e^{(\lambda - \mu)\tau} \frac{ \Big[ \frac{\psi \lambda}{\lambda - \mu}  \big(e^{(\lambda - \mu)\tau} - 1 \big) \Big]^{k-1} }{\Big[ 1 + \frac{\psi \lambda}{\lambda - \mu}\big( e^{(\lambda - \mu)\tau} - 1 \big) \Big]^{n+1}}.  \label{kth_eventtime}
\end{align}
This agrees with the result derived in \citet[Theorem 4.1]{gernhard} for the case of complete sampling; we again note that the result follows almost immediately from the properties of the RRP, which removes the need for deriving the related distributions by hand. 

\subsubsection{Simulating from the RRP}
As described in Section \ref{timerescaling}, applying the time transformation $g_1 = \rho_\gamma$ rescales the RRP $X_\psi^\gamma$ to  the time-reversed Yule rate 1 process  $Y$. From \eqref{rho_gamma}, this transformation is given by:
\begin{align} 
&t = g_1(\gamma) = \log(1 + \frac{\psi \lambda}{\lambda - \mu} \Big( e^{(\lambda - \mu)\gamma} - 1 \Big)), \nonumber \\
&\gamma = g_1^{-1}(t) = \rho_\gamma^{-1}(t) = \frac{1}{\lambda - \mu} \log(1 + \frac{\lambda - \mu}{\psi \lambda} \Big( e^{t} - 1 \Big)), \label{t_to_gamma} 
\end{align}
and we have that 
\begin{equation*}
X_\psi^\gamma(g_1^{-1}(\tau)) = Y(\tau) \text{ and }
X_\psi^\gamma(\tau) = Y(g_1(\tau)),
\end{equation*} 
by which we mean that $X_\psi^\gamma$ rescaled in time units $g_1(\gamma)$ has the same death rate as $Y$. To see why this works, the death rate of $X_\psi^\gamma$ when measured in units $t = g_1(\gamma)$ becomes:
\[
m_t(\tau) = m_\gamma(g_1^{-1}(\tau)) \abs{\frac{d}{d\tau} \, g_1^{-1} (\tau)} = m_\gamma(\rho_\gamma^{-1}(\tau)) \abs{\frac{d}{d\tau} \, \rho^{-1}_\gamma(\tau)} =  m_\gamma(\rho_\gamma^{-1}(\tau)) \Big/m_\gamma(\rho_\gamma^{-1}(\tau)) = 1.
\]

 Note also that in the complete process, birth, death, and sampling events affect all individuals with equal probability, so the RRP trees we study have the same law in topology as Yule and coalescent trees \citep{aldous_clad}, and can thus be generated backwards in time by merging pairs of lineages selected uniformly at random.  This suggests that to simulate from $X_\psi^\gamma$, first we can simulate from  $Y$, and then rescale the event times using the transformation given by \eqref{t_to_gamma}. The method is summarised as Algorithm 1. This provides an alternative to the algorithms of \citet{hartmannsimulating} and \citet{stadlersimulating}, where first the time of origin is drawn from its distribution, and then the coalescent point process formulation is used to obtain the event times.

\begin{algorithm}[htbp!]
\caption{Simulating from the RRP $X_\psi^\gamma$} \label{alg1}
Given $n$ individuals at time 0:
\begin{enumerate}
\item Draw $\widetilde{W}_j \sim \text{Exp}(n-j) $ for $j = 0, \hdots, n-1$, being the waiting times  of $Y$. 
\item Compute the event times $\widetilde{T}_{j+1} = \sum_{i=0}^{j} \widetilde{W}_i$.
\item Rescale the event times as $T_k= \frac{1}{\lambda - \mu} \log(1 + \frac{\lambda - \mu}{\psi \lambda} \left( \exp(\widetilde{T}_k) - 1 \right) )$ for $k = 1, \hdots, n$.
\item Construct a tree from $T_1, \hdots, T_n$ by choosing a pair of lineages uniformly at random to coalesce at each event time.
\end{enumerate}
\end{algorithm}

Note that one can first derive distributions of interest for  $Y$, and then use the change of variables given by \eqref{t_to_gamma} to obtain the equivalent results for $X_\psi^\gamma$. We will use this to derive the distribution of inter-event times $W_k$, analytically, in Section \ref{interevent}. 

\subsubsection{Relationship with coalescent point processes} \label{CPP_link}

\citet{gernhard} gives the following CPP formulation for a supercritical process. To simulate an RRP for a sample of size $n$, first condition on the sample size and a time of origin $T_n$ (possibly drawn from the distribution \eqref{origin_cdf}), and then draw the times of the $n-1$ bifurcations in the tree i.i.d.\ from some specific density depending on $T_{n}$. \citet{lambertstadler} further give this density for the case of Bernoulli sampling. In a sense, conditioning on the time of origin, the event times can thus be simulated ``horizontally", one-by-one for each sampled lineage, rather than ``vertically", i.e.\ forwards or backwards in time. 

The formulation of the RRP as a pure-death process also allows for simulation of the RRP lineage-by-lineage, conditioning on the sample size but not on the time of origin (producing a tree including the root edge). Because each lineage dies independently from the others, in order to simulate from $X_\psi^\gamma$ for a sample of size $n$, we can simulate the death times of each of the $n$ lineages independently, and then merge the lineages uniformly at random at each event time to create the tree. The death time of one lineage has density:
\begin{align} \label{death_one}
f^{\psi}_{T_{(1)}}(\tau) =   \frac{\psi \lambda e^{(\lambda - \mu)\tau}}{\Big[1 + \frac{\psi \lambda}{\lambda - \mu}\big( e^{(\lambda - \mu)\tau} - 1 \big)\Big]^{2}},
\end{align}
which is obtained from \eqref{pdf_psi} by substituting $n=1$; this can be simulated by drawing from an exponential rate 1 density, and rescaling time using \eqref{t_to_gamma}.  Therefore the relationship between CPP and the pure-death formulation is very direct. With the pure-death formulation, each  of the $n$ lineages dies independently with the same death rate. Once we also condition on a time of origin $T_{n}$, the lineages still die independently, with death rate amended so that each event happens before $T_{n}$. The latter is exactly the CPP formulation of \citet{gernhard}. 

The CPP formulation described in \citet{lambertstadler} also gives a method for simulating a Bernoulli RRP without conditioning on the sample size, as follows. Given a time of origin $T$, draw realisations $H^\psi_1, \hdots, H^\psi_N$ of a random variable $H^\psi$, with the stopping criterion that $H^\psi_N$ is the first realisation that is greater than $T$. Then the $H^\psi_1, \hdots, H^\psi_{N-1}$ are the event times up to the MRCA for a sample of $N$ lineages in a Bernoulli sampled RRP, conditioned on time of origin $T$. Note that in this case, setting $p = P(H^\psi>T)$, the number of sampled lineages is geometric with mass function $(1-p)^{n-1}p$, and the density of $H^\psi$ given in \citet[p.122]{lambertstadler} is exactly that in \eqref{death_one}.

The pure-death formulation of the RRP highlights two differences between the genealogy of a birth-death process and the coalescent. Firstly, viewing the basic coalescent as a backwards in time pure-death process with rate $\frac{j(j-1)}{2}$ when there are $j$ lineages, at each point in time the death rate of each individual lineage depends on the total number of lineages remaining; this dependence cannot be removed by conditioning on the time of origin (for $n>2$). This implies that the process cannot be simulated by drawing the death time of each lineage independently from some density, as for the RRP. This supports the conjecture of \citet{lambertstadler} that the coalescent does not have a CPP representation.

Secondly, the coalescent with variable population size, as described by \citet{griffithstavare}, can be described as an inhomogeneous pure-death process, where the death rate is quadratic in the number of lineages and depends on a population size function. Because the death rate of the RRP is linear in the number of lineages, there is no population size function which would equate the two models.


\subsection{Relationship between completely and incompletely sampled RRPs}

\citet{stadlersampling} noted that there is a relationship between the RRP of the incompletely sampled BDP($\lambda, \mu, \psi$), and the RRP of the completely sampled BDP($\widehat\lambda, \widehat\mu, 1$), through the following transformation of the birth and death parameters:
\begin{align}
&\widehat\lambda = \psi \lambda, \;\; \widehat\mu = \mu - \lambda(1 - \psi). \label{sampling_trf}
\end{align}
Substituting \eqref{sampling_trf} as the birth and death rates into \eqref{BDP1_m} gives \eqref{RRP_rate}. Thus, the resulting process looks like the RRP of an incompletely sampled BDP$(\lambda, \mu,\psi$) population process. However, as noted by \citet{stadlersteel}, $\widehat \mu$ can be negative (in particular, for very small values of $\psi$); for instance, with the parameters used in Figure \ref{bd_trees}, $\widehat\mu = -1/60$. In this case, the interpretation as an RRP of some birth-death process is problematic. \citet{stadlersteel,stadlerswapping} discuss that when distributions are derived for the completely sampled process, this reparameterisation trick can be used to obtain the equivalent distributions for a process with incomplete sampling, but only for $\frac{\mu}{\lambda} \geq 1 - \psi$. Thus, this method of transforming the birth and death rates does not always produce a valid mapping between completely and incompletely sampled RRPs.

To avoid this issue, instead of transforming the birth and death parameters directly, we use a transformation of time, and demonstrate the relationship between the RRPs $X_\psi^\gamma$ and $X_1^\beta$. We do not introduce restrictions on the values of the parameters ($\lambda, \mu, \psi$), so this allows distributions derived for the completely sampled process to be transformed for the case of incomplete sampling. 

\subsubsection{Time transformation from $X_\psi$ to $X_1$}

Define the transformation of time units $g_2$ as:
\begin{align} \label{tprime}
\beta = g_2(\gamma) =  \frac{1}{\lambda - \mu} \log (1 + \psi (e^{(\lambda - \mu)\gamma} - 1)), \\
\gamma = g_2^{-1}(\beta) = \frac{1}{\lambda - \mu} \log (1 + \frac{1}{\psi} (e^{(\lambda - \mu)\beta} - 1)). \nonumber
\end{align}
This is a valid time transformation with $\gamma = 0 \iff \beta = 0$, and $\gamma = \beta$ when $\psi = 1$. Using a change of variable in \eqref{RRP_rate}, we compute the death rate:
\begin{align*}
m_\beta(\tau) &= m_\gamma(g_2^{-1}(\tau)) \cdot \abs{\frac{dg_2^{-1}(\tau)}{d\tau}} \\
&= \frac{\psi \lambda (1 + \frac{1}{\psi} (e^{(\lambda - \mu)\tau} -1))}{1 + \frac{\lambda}{\lambda - \mu} (e^{(\lambda - \mu)\tau} - 1)} \cdot \frac{ \frac{1}{\psi} e^{(\lambda - \mu)\tau}}{1 + \frac{1}{\psi} (e^{(\lambda - \mu)\tau}-1)} \\
&= \frac{\lambda e^{(\lambda - \mu)\tau}}{1 + \frac{\lambda}{\lambda - \mu} (e^{(\lambda - \mu)\tau} - 1)}.
\end{align*}
This is the death rate of the completely sampled RRP $X_1^\beta$ as given in \eqref{BDP1_m}.  Thus, we have the relationship:
\begin{align*}
&X_1^\beta(\tau) = X_\psi^\gamma (g_2^{-1}(\tau)), \\
 &X_1^\beta (g_2(\tau)) = X_\psi ^\gamma (\tau).
\end{align*} 
The RRP of a BDP$(\lambda, \mu, \psi)$ process is a time rescaled version of the RRP of a completely sampled BDP$(\lambda, \mu, 1)$ process. In effect, introducing incomplete sampling is equivalent to non-linearly rescaling the RRP of the BDP$(\lambda, \mu, 1)$ process using the time transformation \eqref{tprime}. 

\subsubsection{Deriving results for $X_\psi$ from $X_1$}

Using the time transformation approach, distributions can be derived for $X_1^\beta$ with complete sampling, and then the equivalent distribution results for $X_\psi^\gamma$ can be obtained through a simple change of variables. As an example, \citet{stadlersteel} derive the density of the length of a randomly chosen pendant edge (an edge adjacent to a sampled individual)   for an incompletely sampled tree with the restriction $1 - \psi \leq \frac{\mu}{\lambda} \leq 1$; we complete the proof for the case $0 \leq \frac{\mu}{\lambda} \leq 1 - \psi$. 

\begin{Proposition}
The density of a the length of a randomly chosen pendant edge, $E$, of the RRP $X_\psi^\gamma$ for any $0 \leq \mu < \lambda$ and $\psi \in (0,1]$ is
\begin{equation*}
f^\psi_E(\tau) = \frac{2 \psi \lambda (\lambda - \mu)^3 e^{(\lambda - \mu)\tau}}{\big(\lambda \psi e^{(\lambda - \mu)\tau} - [\mu - \lambda(1 - \psi)] \big)^3}.
\end{equation*}
\end{Proposition}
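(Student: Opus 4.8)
The plan is to reduce the incompletely sampled case to the completely sampled case via the time transformation $g_2$ of \eqref{tprime}, whose validity for \emph{all} admissible parameters is exactly what removes the restriction $\mu/\lambda \ge 1-\psi$. First I would record the pendant-edge density for the completely sampled RRP $X_1^\beta$. This is available from \citet{stadlersteel}: their derivation requires $1-\psi \le \mu/\lambda \le 1$, but at complete sampling $\psi = 1$ this reads $0 \le \mu/\lambda \le 1$, which holds for every $0 \le \mu < \lambda$. Hence their result applies unconditionally to $X_1^\beta$ and gives, in $\beta$-units,
\begin{equation*}
f_1(\beta) = \frac{2\lambda(\lambda-\mu)^3 e^{(\lambda-\mu)\beta}}{\big(\lambda e^{(\lambda-\mu)\beta} - \mu\big)^3},
\end{equation*}
i.e.\ the asserted density specialised to $\psi = 1$.

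Next I would transfer this to $X_\psi^\gamma$ using the relationship established in the preceding subsection, namely that $X_\psi^\gamma$ and $X_1^\beta$ are the same RRP expressed in the two time scales linked by $\beta = g_2(\gamma)$. Since $g_2$ is a strictly increasing bijection with $g_2(0)=0$, converting every node depth of $X_1^\beta$ from $\beta$-units to $\gamma$-units by $g_2^{-1}$ yields the timed genealogy of $X_\psi^\gamma$; monotonicity preserves the ordering of the node depths and hence the tree topology, so terminal edges map to terminal edges. A pendant edge length is a functional of the node depths (the depth of a leaf's parent, which in the coalescent point process encoding is a minimum of adjacent node depths), and minima commute with the monotone map $g_2^{-1}$. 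Consequently, if $L_1 \sim f_1$ is a pendant edge length of $X_1^\beta$, the corresponding pendant edge length of $X_\psi^\gamma$ is $L_\psi = g_2^{-1}(L_1)$, with no constraint on $(\lambda,\mu,\psi)$ because the correspondence is pathwise.

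It then remains to apply the change of variables $f_\psi(\tau) = f_1(g_2(\tau))\,g_2'(\tau)$. Here I would use the two identities that follow directly from \eqref{tprime}, namely $e^{(\lambda-\mu)g_2(\tau)} = 1 + \psi\big(e^{(\lambda-\mu)\tau}-1\big)$ and $g_2'(\tau) = \psi e^{(\lambda-\mu)\tau}\big/\big(1 + \psi(e^{(\lambda-\mu)\tau}-1)\big)$. The factor $1+\psi(e^{(\lambda-\mu)\tau}-1)$ produced in the numerator of $f_1(g_2(\tau))$ cancels against the denominator of $g_2'(\tau)$, and rewriting the cubic denominator through $\lambda\big(1+\psi(e^{(\lambda-\mu)\tau}-1)\big) - \mu = \lambda\psi e^{(\lambda-\mu)\tau} - [\mu - \lambda(1-\psi)]$ produces the claimed expression.

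I expect the only genuine obstacle to be the justification of the middle step: that the pendant edge length of the sampled tree transforms as $L_\psi = g_2^{-1}(L_1)$. The essential point is that the map $g_2^{-1}$ is a valid monotone bijection of $[0,\infty)$ for every $\psi \in (0,1]$ and every $0 \le \mu < \lambda$, whereas the equivalent parameter transformation $\widehat\mu = \mu - \lambda(1-\psi)$ of \citet{stadlersteel} ceases to describe a birth-death process once $\widehat\mu < 0$. Working with the time scale rather than the parameters is precisely what extends the density to the remaining range $0 \le \mu/\lambda \le 1-\psi$ and completes the proof. The concluding computation is routine algebra.
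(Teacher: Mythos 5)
Your proposal is correct and takes essentially the same route as the paper: the paper also obtains $f_\psi$ by applying the change of variables induced by the time rescaling $g_2$ of \eqref{tprime} to the completely sampled pendant-edge density $f_1$ (which it cites from \citet{mooers} rather than from the $\psi=1$ specialisation of \citet{stadlersteel}), and the concluding algebra is identical. Your explicit pathwise justification that pendant edges map to pendant edges under the monotone bijection $g_2^{-1}$ is a point the paper leaves implicit in the relation $X_\psi^{g_2(\gamma)} = X_1^\beta$, but it is not a different argument.
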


\begin{proof}
\citet{mooers} give the density of the length of a pendant edge of a completely sampled RRP $X_1^\beta$ as:
\begin{equation} \label{pendant}
f^1_E(\tau) =  \frac{2 \lambda (\lambda - \mu)^3 e^{(\lambda - \mu)\tau}}{(\lambda e^{(\lambda - \mu)\tau} - \mu)^3}.
\end{equation}
Using the time rescaling \eqref{tprime} and a change of variable, for $X_\psi^\gamma$ this becomes:
\begin{align*}
f^\psi_E(\tau) &= f^1_E(g_2(\tau)) \; \abs{\frac{d \, g_2(\tau)}{d\tau}}\\
&= \frac{2 \lambda (\lambda - \mu)^3 [1 + \psi (e^{(\lambda - \mu)\tau} - 1)]}{\big(\lambda \big[1 + \psi (e^{(\lambda - \mu)\tau} - 1) \big] - \mu \big)^3} \cdot \frac{\psi e^{(\lambda - \mu)\tau}}{1 + \psi (e^{(\lambda - \mu)\tau} - 1)}\\
&= \frac{2 \psi \lambda (\lambda - \mu)^3 e^{(\lambda - \mu)\tau}}{\big(\lambda \psi e^{(\lambda - \mu)\tau} - [\mu - \lambda(1 - \psi)] \big)^3}.
\end{align*} 
\end{proof}
Equivalence with the result of \citet[Section 4]{stadlersteel} for $\frac{\mu}{\lambda} \geq 1 - \psi$ is easily checked by substituting the birth rate $\widehat \lambda$ and death rate $\widehat \mu$ into \eqref{pendant}.


\section{Sampling from large populations} \label{psi_0_section}

We now consider the setting where the total population size is very large compared to the sample size $n$. This is a scenario often encountered in practice when collecting genetic data, particularly from viral or bacterial populations, when the population size is unknown but can be presumed very large. An example will be mentioned within the discussion in Section \ref{discussion_section}.

This situation is to be distinguished from the limit as the sample size grows to infinity, which has been considered in \citet{wiuf_bd}. The scenario of interest here is when the total population tends to infinity, but a finite sample of size $n$ is obtained. This can be interpreted as either the Bernoulli sampling probability $\psi$ going to 0, or the total population size growing to infinity in the case of $n$-sampling. In the following section we will discuss why the two regimes are conceptually similar.

In this  section, for the sake of readability of the expressions, we rescale time linearly in units of $\delta = (\lambda - \mu)\gamma$, and write $\lambda' = \frac{\lambda}{\lambda - \mu}, \mu' = \frac{\mu}{\lambda - \mu}$ with $\lambda' - \mu' = 1$. This simplifies the formulas, and is easy to reverse within any derived expressions. The RRP on this timescale is denoted $X_\psi^\delta$, with death rate 
\begin{equation*}
m_\delta(\tau) = \frac{\psi\lambda' e^{\tau}} {1 + \psi \lambda' (e^{\tau}-1)}.
\end{equation*}
The time transformation between $X_\psi^\delta$ and $Y$ is given by $g_3 = \rho_\delta$, with
\begin{align} \label{delta_trf}
&t = g_3(\delta) = \log(1 + \psi \lambda'(e^\delta - 1)), \\
&\delta = g_3^{-1}(t) = \rho_\delta^{-1}(t) = \log(1 + \frac{1}{\psi\lambda'}(e^t-1)), \label{delta_formula}
\end{align}


\subsection{Sampling method}

\citet{lambert} showed the following relationship between the two sampling scenarios when considered from a CPP perspective. Bernoulli sampled trees can be generated using the CPP formulation; that is, conditioning on a time of origin $T$, the event times are i.i.d.\ according to a specific density (as described in Section \ref{CPP_link}). For $n$-sampling, if we were to first generate a CPP tree with complete sampling (conditioned to have size at least $n$), and then choose $n$ lineages uniformly at random, then this would not have a CPP formulation \citep{lambertstadler}. However, the genealogy of such an $n$-sample  can be obtained by first drawing a sampling probability $\Psi=y$ from a specific improper prior, and then generating a Bernoulli CPP of size $n$ with sampling probability $y$. The improper prior has the form \citep[Theorem 3]{lambert}: 
\[
\frac{n(1-a)y^{n-1}}{(1-a(1-y))^{n+1}},
\]
where $a = P(H<T)$ is the probability that the random variable corresponding to event times (in the complete tree) takes a value less than the specified time of origin.

The underlying population (of the complete tree) growing to infinity can be seen to correspond to the time of origin of the complete process growing to infinity, and thus the probability $a = P(H<T)$ approaching 1. In this case, the improper prior on $\Psi$ tends to a point mass at $y = 0$. We do not explicitly condition on $T$, however this argument implies that the behaviour of the RRP for Bernoulli sampling with $\psi \to 0$, and for $n$-sampling when the underlying population grows to infinity, should be the same.


\subsection{Relationship between $X_\psi^\delta$ and $Y$ for small $\psi$} \label{X_psi_Y}

We examine the effect of the time rescaling between $X_\psi^\delta$ and  the time-reversed Yule rate 1 process $Y$, when $\psi \to 0$. In the following, we assume that $\lambda'$ is fixed and very small compared to $1/\psi$.

\begin{figure}[htbp!]
\centering
\includegraphics[width=0.9\textwidth]{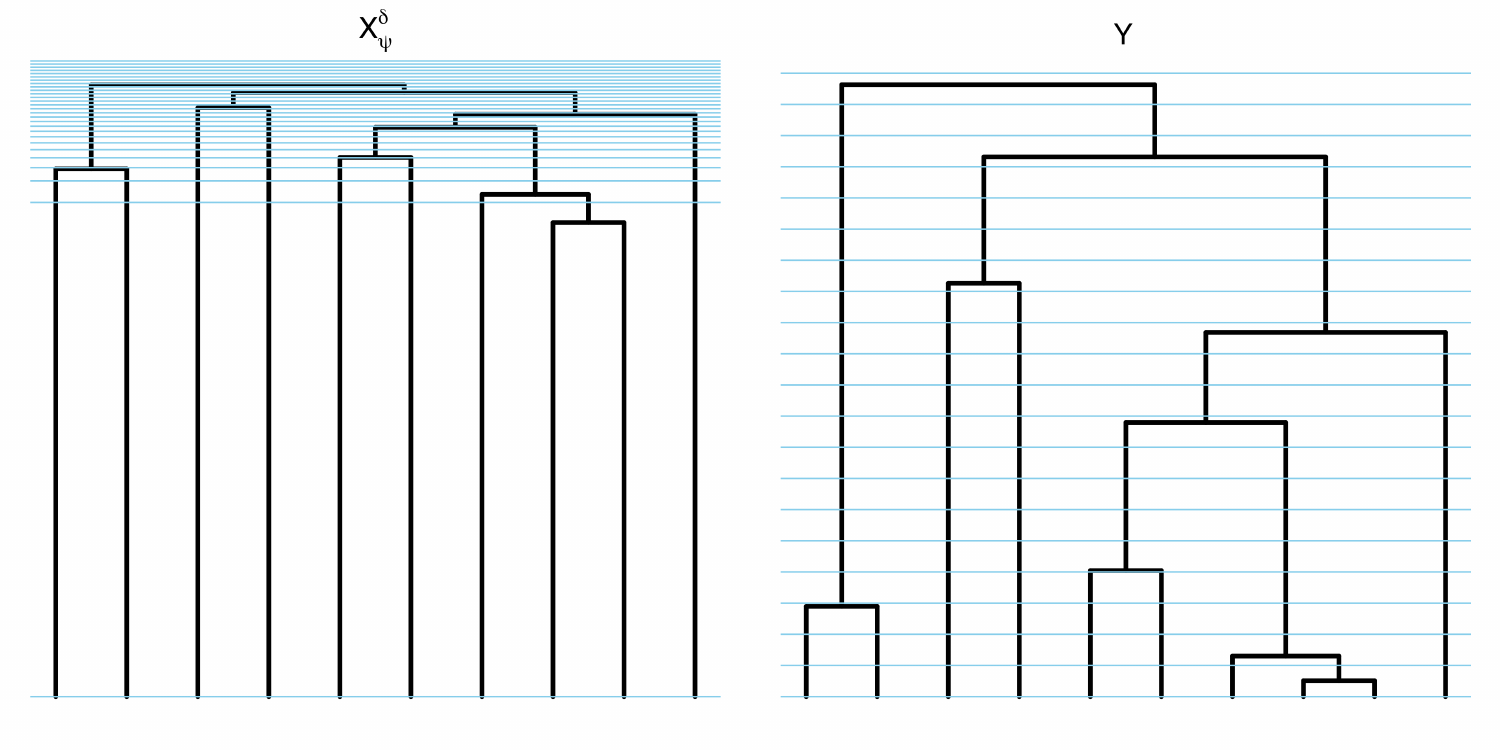}
\caption{Left: realisation of $X_\psi^\delta$ with $\psi = e^{-20}, \lambda'=2$. Right: same tree, rescaled in time units given by \eqref{delta_trf}. Intervals delineated by blue lines in the left panel are rescaled to intervals of equal length in the right panel.} \label{x_psi_v_y}
\end{figure}

Consider the time rescaling given by \eqref{delta_trf}: the process $X_\psi^\delta$ rescaled in units of $g_3(\delta)$ is  a time-reversed Yule rate 1 process. This rescaling is illustrated in Figure \ref{x_psi_v_y} for a small value of $\psi$; the left panel shows a realisation of $X_\psi^\delta$ for $n=10$. The right panel shows the same tree, but the intervals delineated by blue lines in the left panel are rescaled to intervals of equal length in the right panel. 

Using the identity $\log(1+x) = \log(x) + \log(1 + 1/x)$ and a Taylor expansion in $\psi\lambda'$ around 0, we obtain from \eqref{delta_formula}:
\begin{equation} \label{delta_time}
\delta - \log(\frac{1}{\psi\lambda'}) =  \log(e^t - 1) + \mathcal{O}(\psi\lambda').
\end{equation}
For small $t$, we have that $e^t-1 \approx t$ and the transformation behaves as $\delta- \log(1/(\psi\lambda')) \approx \log(t)$. For large $t$, we have $\log(e^t-1) \approx t$, so $\delta - \log(1/(\psi\lambda')) \approx t$. Thus, there are two time regimes, with a smooth transition between them.

\begin{figure}[htbp!]
\centering
\includegraphics[width=0.9\textwidth]{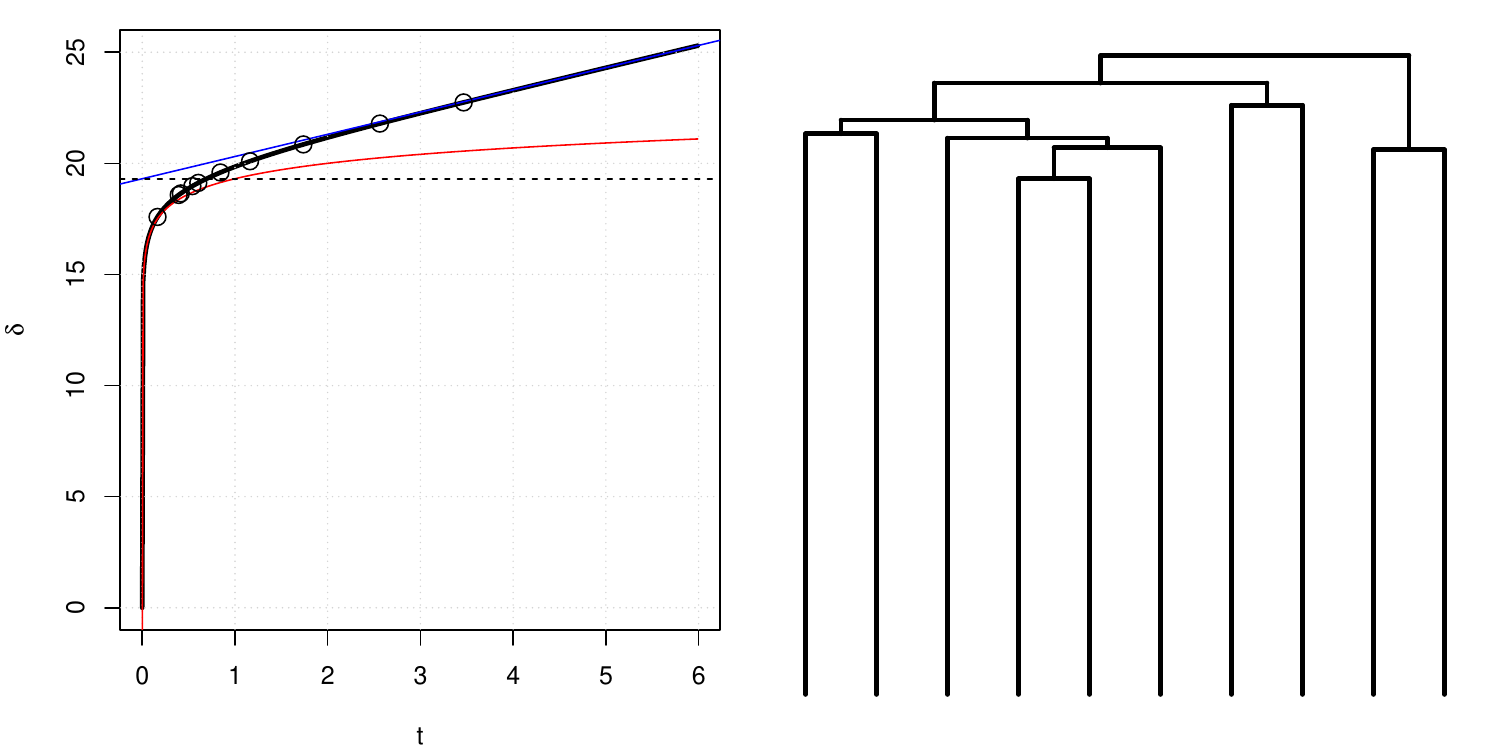}
\caption{Left: solid black line shows time rescaling between $\delta$ and $t$ (time units of $Y$). In blue: line $\delta =  t - \log(\psi \lambda')$. In red: curve $ \delta =  \log(t) - \log(\psi\lambda')$. Dashed black line shows $\delta = -\log(\psi \lambda')$. Dots show simulated event times. Right: tree corresponding to the simulated event times. Parameters used: $\psi = e^{-20}, \lambda'=2, \mu'=1$.} \label{t_v_delta}
\end{figure}

This can be understood as follows. Under Bernoulli sampling, the sample size $n$ is of order $\psi N$, where $N$ is the underlying population size in the complete tree. In the limit $\psi \to 0$, $N$ is therefore $\mathcal{O}(\psi^{-1})$; this is very large compared to $n$, and no coalescences happen for a very long time: the probability of going from $n$ to $n-1$ individuals in time $\tau$ is, from \eqref{transitions}:
\begin{equation*}
P_{n, n-1}(\tau) = \big(1 - e^{- \rho_\delta(\tau)} \big) \big(e^{{-\rho_\delta(\tau)}}\big)^{n-1},
\end{equation*}
where
\begin{equation*}
e^{-\rho_\delta(\tau)} = \frac{1}{ 1  + \psi \lambda' (e^{\tau} - 1)}
\end{equation*}
is the probability of no event happening. This is very close to 1 until $\tau$ grows to the order of $\log(1/\psi)$. 

With the time transformation above, a step of one unit of $t$ approximately corresponds to taking a time step of $\log(1/\psi)$ in units of $\delta$. At this point, $e^{-\rho_\delta(\tau)} \approx (1 + \lambda')^{-1}$ and the sample starts to coalesce. Then steps in $t$ become roughly equal to steps in $\delta$. In essence, we zoom back to a time when the underlying population was of order $n$, and then slow back down to linear time. 

Figure \ref{t_v_delta} shows an example of the time rescaling \eqref{delta_formula} for $\psi = e^{-20}, \, \lambda' = 2, \, \mu' = 1$. The left panel shows $\delta$ against $t$; the horizontal axis is the time scale of the $Y$ process, the vertical axis is the time scale of $X_\psi^\delta$. The red line shows the curve $\delta = \log(t) + \log(1/(\psi \lambda'))$; the blue line shows $\delta = t + \log(1/(\psi \lambda')) $. The circles indicate a set of simulated event times for a sample size $n=10$. For instance, the time to first event is $T_1 \sim \text{Exp}(n)$ on the horizontal axis; this is rescaled using \eqref{delta_formula} to get the corresponding time on the vertical axis. The right panel shows the corresponding RRP tree.

As $\psi \to 0$, $\log(1/(\psi\lambda')) \to \infty$, so the rescaled time of the first event in units of $\delta$ grows to infinity, and the reconstructed tree of $X_\psi^\delta$ becomes star-shaped. The terminal branches dominate the tree, but the inter-event times near the origin of the tree are still approximately exponentially distributed with rate depending on the remaining number of lineages, as the time rescaling for large $t$ is approximately linear.


\subsection{Density of inter-event times in the limit $\psi \to 0$} \label{interevent}

We now derive, analytically, the density of inter-event times, first for any $\psi \in (0,1]$, then for the limit $\psi \to 0$.

\begin{Theorem} \label{L1}
The density of waiting times $W_k = T_{k+1} - T_k$ between events $k$ and $k+1$, $k \in \{ 0, \hdots, n-1 \}$, for the RRP $X_\psi^\delta$ with $\psi \in (0,1]$, is:
\begin{multline} \label{Wk}
f^{\psi}_{W_k}(w) = \frac{(n-k)}{(n+1)} e^{-(n-k)w}  \Big[(n+1) {}_2F_1(n-k+1, n-k+1; n+1; (1 - \psi \lambda')(1-e^{-w}))\\
{}- (1 - \psi \lambda') (n-k+1) {}_2F_1(n-k+1, n-k+2; n+2; (1 - \psi \lambda')(1-e^{-w})) \Big],
\end{multline}
where ${}_2 F_1$ is the ordinary hypergeometric function. In the case of a critical branching process with birth and death rate $\lambda$ and RRP $Z_\psi^\alpha$, this becomes:
\begin{equation*}
\hat{f} ^{\psi}_{W_k}(v) =  \frac{(n-k+1) (n-k)}{n+1} \psi \lambda \cdot {}_2F_1 (n-k+1, n-k+2; n+2; -\psi \lambda v).
\end{equation*}
\end{Theorem}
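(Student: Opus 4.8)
The plan is to exploit the time rescaling $g_3 = \rho_\delta$ between $X_\psi^\delta$ and the Yule rate 1 process $Y^t$, for which the joint law of the relevant event times is transparent. Since $Y^t$ is a pure-death process its holding times are independent, so the $k$-th event time $\tilde T_k$ and the subsequent inter-event time $\tilde W_k = \tilde T_{k+1} - \tilde T_k \sim \text{Exp}(n-k)$ are independent; moreover $\tilde T_k$ has the density \eqref{pd1_kthevent}. Equivalently, because each of the $n$ Yule lineages dies independently at rate $1$, the $\tilde T_j$ are the order statistics of $n$ i.i.d.\ $\text{Exp}(1)$ variables, so $V_k \defeq e^{-\tilde T_k} \sim \text{Beta}(n-k+1,k)$. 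Under the rescaling the event times of $X_\psi^\delta$ are $T_j = g_3^{-1}(\tilde T_j)$, hence $W_k = g_3^{-1}(\tilde T_k + \tilde W_k) - g_3^{-1}(\tilde T_k)$. The crux is that, unlike in $Y^t$, $W_k$ depends on both $\tilde T_k$ and $\tilde W_k$ through the nonlinear map $g_3^{-1}$ of \eqref{delta_formula}, so it is not a one-variable transform and the joint law is genuinely needed.

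First I would obtain the conditional density of $W_k$ given $\tilde T_k$. A convenient route is to pass to the variables $V_k = e^{-\tilde T_k}$ and $S \defeq e^{-\tilde W_k} \sim \text{Beta}(n-k,1)$, which are independent, and in which the rescaling collapses to the algebraic identity $e^{-W_k} = S(1 - q V_k)/(1 - q V_k S)$ with $q \defeq 1 - \psi\lambda'$. Inverting for $S$ at fixed $V_k = v$ gives $S = e^{-w}/[1 - qv(1-e^{-w})]$, and the resulting change-of-variables factor yields the conditional density $(n-k)(1-qv)\,e^{-(n-k)w}\,[1 - qv(1-e^{-w})]^{-(n-k+1)}$.

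Next I would marginalise over $V_k \sim \text{Beta}(n-k+1,k)$. Writing $z \defeq (1-\psi\lambda')(1 - e^{-w})$, the factor $(1-qv)$ in the numerator splits the integral into two Beta-type integrals, carrying $v^{n-k}$ and $v^{n-k+1}$ respectively, each of the form $\int_0^1 v^{b-1}(1-v)^{k-1}(1-zv)^{-(n-k+1)}\,dv$. I would recognise these as Euler's integral representation of the Gauss hypergeometric function, giving ${}_2F_1(n-k+1,n-k+1;n+1;z)$ (from $b=n-k+1$) and ${}_2F_1(n-k+1,n-k+2;n+2;z)$ (from $b=n-k+2$). After simplifying the Beta-function prefactors against the Beta normalising constant, using $C\,B(n-k+1,k)=1$ and $C\,B(n-k+2,k)=(n-k+1)/(n+1)$ with $C = n!/[(n-k)!(k-1)!]$, the two terms assemble into exactly \eqref{Wk}.

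I expect the main obstacle to be the integral evaluation and its identification as a combination of two ${}_2F_1$ functions: the appearance of two terms (with $c=n+1$ and $c=n+2$) is not obvious a priori and emerges precisely from the Jacobian factor $(1-qv)$. Two further points need care. One must check that Euler's representation is valid, i.e.\ that $z = (1-\psi\lambda')(1-e^{-w}) < 1$ for all $w>0$ and all $\psi\lambda'>0$ (it is, since $z<0$ when $\psi\lambda'\ge 1$ and $z\in(0,1-\psi\lambda')$ otherwise), and that $n+1 > n-k+1 > 0$, which forces the boundary case $k=0$ (the time to the first event, with $\tilde T_0 = 0$ and no averaging) to be treated separately; there \eqref{Wk} reduces, via ${}_2F_1(a,b;b;z)=(1-z)^{-a}$, to the first-event density \eqref{next_pdf} on the $\delta$ scale. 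Finally, for the critical branching process I would take $\lambda\to\mu$, equivalently $\lambda'\to\infty$, with the rescaling $w = (\lambda-\mu)v$: then $z \to -\psi\lambda v$ and $e^{-(n-k)w}\to 1$, the first hypergeometric term vanishes once multiplied by the $(\lambda-\mu)$ change-of-variables factor, and the surviving term reproduces the stated ${}_2F_1(n-k+1,n-k+2;n+2;-\psi\lambda v)$ expression.
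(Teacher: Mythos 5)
Your derivation of the main formula \eqref{Wk} is correct and is essentially the paper's own proof in different clothing: the paper conditions on the $k$-th Yule event time $s$, changes variables through $g_3^{-1}$ to get the conditional density of $W_k$, marginalises against the density \eqref{pd1_kthevent} via the substitution $u=e^{-s}$, and evaluates the two resulting Euler integrals as the two ${}_2F_1$ terms. Your multiplicative variables $V_k=e^{-\tilde T_k}\sim\text{Beta}(n-k+1,k)$ and $S=e^{-\tilde W_k}$ are exactly this substitution done up front; your conditional density $(n-k)(1-qv)e^{-(n-k)w}\bigl[1-qv(1-e^{-w})\bigr]^{-(n-k+1)}$ and your Beta-prefactor bookkeeping coincide term-by-term with the paper's. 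The one place you genuinely depart is the critical branching case: the paper rederives it from scratch, repeating the same steps with the critical time change $v=\rho_\alpha^{-1}(s+t)-\rho_\alpha^{-1}(s)=\frac{1}{\psi\lambda}e^s(e^t-1)$, whereas you obtain it as the limit $\lambda\to\mu$ (i.e.\ $\lambda'\to\infty$ after undoing the linear rescaling $w=(\lambda-\mu)v$) of the supercritical formula. Your limit computation is right --- $(\lambda-\mu)(1-\psi\lambda')\to-\psi\lambda$ kills the first hypergeometric term and converts the second into the stated one --- but to count as a proof it needs one extra sentence justifying that the critical waiting-time density is the pointwise limit of the supercritical ones (e.g.\ via locally uniform convergence of the death rates, which the paper implicitly endorses by defining the critical rate as that limit); the paper's parallel derivation sidesteps this entirely. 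On the other hand, your explicit treatment of the $k=0$ boundary (no marginalisation, reduction to \eqref{next_pdf} via ${}_2F_1(a,b;b;z)=(1-z)^{-a}$) and your check that Euler's representation applies ($z<1$ and $c>b>0$) are more careful than the paper, which only remarks on the $k=0$ reduction after the theorem statement.
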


Note that for $k=0$, $f^\psi_{W_0}(w)$ reduces to the density of the first event,  obtained by substituting $k=1$ in \eqref{kth_eventtime}.  We have the following case for $\psi \to 0$:

\begin{Corollary} \label{L2}The density of waiting times $W_k$ between events $k$ and $k+1$, $k \in \{1, \hdots, n-1 \}$, in the limit $\psi \to 0$, is:
\begin{equation} \label{fkw}
f^0_{W_k}(w) = \frac{k(n-k)}{n+1} e^{-(n-k)w} {}_2F_1(n-k+1, n-k+1; n+2; 1-e^{-w}).
\end{equation}
\end{Corollary}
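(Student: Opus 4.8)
The plan is to obtain Corollary \ref{L2} directly from Theorem \ref{L1} by taking the limit $\psi \to 0$ inside the exact expression \eqref{Wk} and then collapsing the resulting combination of hypergeometric functions with a single Gauss contiguous relation. Throughout I would fix $k \in \{1, \hdots, n-1\}$ and abbreviate $a = n-k+1$, $c = n+1$, and $z = 1 - e^{-w}$, so that $c - a = k$ and, for $w > 0$, we have $0 < z < 1$.

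First I would take the limit. In \eqref{Wk} the only dependence on $\psi$ enters through the factor $(1 - \psi\lambda')$ multiplying the second term and through the shared argument $(1 - \psi\lambda')(1 - e^{-w})$ of both hypergeometric functions. As $\psi \to 0$ we have $1 - \psi\lambda' \to 1$, while the argument increases to $z = 1 - e^{-w}$, staying within $[0,1)$; since ${}_2F_1$ with fixed integer parameters is analytic, hence continuous, on the open unit disc, the limit passes inside each hypergeometric function. This yields
\begin{equation*}
f^{(k)}(w) = \frac{n-k}{n+1}\, e^{-(n-k)w} \Big[ c\, {}_2F_1(a, a; c; z) - a\, {}_2F_1(a, a+1; c+1; z) \Big].
\end{equation*}

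The heart of the argument is then the identity
\begin{equation*}
c\, {}_2F_1(a, a; c; z) - a\, {}_2F_1(a, a+1; c+1; z) = (c-a)\, {}_2F_1(a, a; c+1; z),
\end{equation*}
which I would establish by comparing coefficients of $z^m$ on each side. Using the Pochhammer identities $a\,(a+1)_m = (a+m)(a)_m$ and $(c+1)_m = (c+m)(c)_m/c$, the coefficient of $z^m$ on the left collapses to $\frac{c(a)_m^2}{(c)_m m!}\big(1 - \frac{a+m}{c+m}\big) = \frac{c(c-a)(a)_m^2}{(c)_m(c+m)m!}$, which is exactly the coefficient of $z^m$ on the right. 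Substituting $c - a = k$ and reinstating the prefactor then gives $f^{(k)}(w) = \frac{k(n-k)}{n+1} e^{-(n-k)w}\, {}_2F_1(n-k+1, n-k+1; n+2; 1-e^{-w})$, as claimed.

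I expect the main obstacle to be the hypergeometric identity rather than the limit itself, and the cleanest route is the term-by-term series comparison above (alternatively one could cite and combine the standard contiguous relations for ${}_2F_1$). I would also remark on why the statement is restricted to $k \geq 1$: the factor $c - a = k$ vanishes at $k = 0$, consistent with the observation in Section \ref{X_psi_Y} that as $\psi \to 0$ the time to the first event diverges, so $W_0$ has no proper limiting density; for $k \geq 1$ the inter-event times near the root stabilise on the shorter timescale and the limiting density above is proper.
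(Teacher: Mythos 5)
Your proposal is correct and follows essentially the same route as the paper: pass to the limit $\psi \to 0$ in the exact density \eqref{Wk} and then collapse the two hypergeometric terms using the contiguous relation $c\, {}_2F_1(a,b;c;z) - a\, {}_2F_1(b,a+1;c+1;z) = (c-a)\, {}_2F_1(a,b;c+1;z)$, which is precisely the identity the paper obtains by combining (15.2.16) and (15.2.20) of Abramowitz and Stegun. The only differences are cosmetic: you verify this relation by term-by-term comparison of the series coefficients (correctly) rather than citing the standard contiguous relations, and you justify the interchange of limit and ${}_2F_1$ explicitly where the paper simply substitutes $\psi = 0$.
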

This is not a density for $k=0$, i.e.\ for the waiting time to the first event; recall that for $\psi \to 0$ the first event time goes to infinity. 

Note that using the transformation \citep[p.64]{erdelyi}
\begin{equation*}
{}_2F_1(a, b; c; z) = (1-z)^{c - a - b} {}_2F_1(c-a, c-b; c; z),
\end{equation*}
the densities of the $k$-th and $(n-k)$-th waiting times are equal:
\begin{align}
f^{0}_{W_k}(w) &= \frac{k(n-k)}{n+1} e^{-(n-k)w} {}_2F_1(n-k+1, n-k+1; n+2; 1-e^{-w}) \nonumber \\
&= \frac{k(n-k)}{n+1} e^{-(n-k)w} e^{-(2k-n)w} {}_2F_1(k+1, k+1; n+2; 1-e^{-w}) \nonumber \\
&=  \frac{k(n-k)}{n+1} e^{-kw} {}_2F_1(k+1, k+1; n+2; 1-e^{-w}) \label{fk_forint}\\
&= f^0_{W_{n-k}}(w). \nonumber
\end{align}
This is an interesting property of the RRP tree in the limit. The inter-event times are symmetric, for instance the time it takes to go from $n-1$ to $n-2$ lineages, and the time it takes for the last lineage to die, have the same distribution. 

 To gain some insights into why this is true, consider the event times of the time-reversed Yule rate 1 process, which are distributed as the order statistics of $n$ exponential rate 1 random variables, say $X_1 \leq X_2 \leq \hdots \leq X_n$. The form of equation \eqref{delta_time} implies that in the limit $\psi \to 0$, the $k$-th event time $T_k$ can be obtained via the transformation $T_k = \log(1/(\psi \lambda')) + \log(e^{X_k} - 1)$. If $X \sim \text{Exp}(1)$, then $\log(e^X - 1)$ has the standard logistic distribution \citep{logistic_exp}. It thus follows that, in the limit, the shifted event time defined as $T'_k \defeq T_k - \log(1/(\psi \lambda'))$ is distributed as the $k$-th order statistic of $n$ draws from the standard logistic distribution, which has pdf 
\begin{equation} \label{logistic_pdf}
f^0_{T_{(1)}'} (\tau') = \frac{e^{\tau'}}{(1 + e^{\tau'})^2}.
\end{equation}
Note that this is equivalent to saying that $T_k$ is distributed as the $k$-th order statistic of $n$ draws from the logistic distribution with location parameter (mode) $\log(1/(\psi\lambda'))$ and scale 1. The same conclusion can also be reached by considering the coalescent point process density \eqref{death_one}, writing $\tau = \tau' + \log(1/(\psi \lambda')) $ and taking the limit $\psi \to 0$, which gives the density \eqref{logistic_pdf}.

The limiting density of $T_k'$ can also be obtained by applying the rescaling $\delta = (\lambda - \mu) \gamma$ and writing $\lambda' = \frac{\lambda}{\lambda - \mu}$ in the density \eqref{kth_eventtime},
\begin{equation*}
f^\psi_{T_k}(\tau) = {n \choose k} k \; \frac{\psi \lambda' e^\tau [\psi \lambda' (e^\tau - 1)]^{k-1}}{[1 + \psi \lambda' (e^\tau - 1)]^{n+1}},\nonumber
\end{equation*}
writing $T_k'  = T_k - \log(1/(\psi \lambda'))$ and taking the limit $\psi \to 0$ gives
\begin{equation}
f^0_{T_k'} ( \tau') = \lim_{\psi \to 0 }{n \choose k} k \; \frac{e^{\tau'} [e^{\tau'} - \psi \lambda' ]^{k-1}}{[1 + e^{\tau'} - \psi \lambda']^{n+1}}  = {n \choose k} k \; \frac{[e^{\tau'}]^{k}}{[1 + e^{\tau'}]^{n+1}}, \label{f_tkprime}
\end{equation}
which, again, is the density of the $k$-th order statistic for the standard logistic distribution.

\begin{figure}[htbp!]
\centering
\includegraphics[width=0.9\textwidth, trim={0 0 0 1.5cm}, clip]{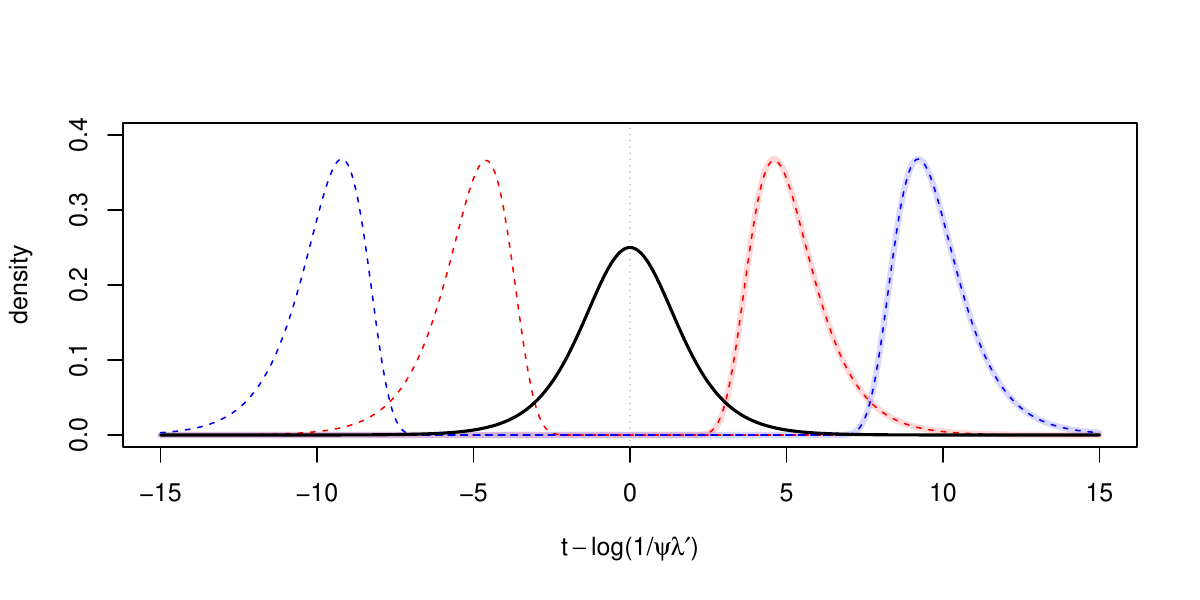}
\caption{ $x$-axis shows time shifted by $\log(1/(\psi \lambda'))$. Black solid line: standard logistic density \eqref{logistic_pdf}. Dashed lines: density \eqref{f_tkprime} of shifted time to first and last event for $n=100$ (red) and $n=10\ 000$ (blue). Faint solid lines: Gumbel density with parameters ($\log n, 1$) for $n=100$ (red) and $n=10\ 000$ (blue).} \label{hprime_density}
\end{figure}

As the logistic density \eqref{logistic_pdf} is symmetric around 0, the order statistics are also symmetric, with $T'_k \stackrel{d}{=} -T'_{n-k+1}$ \citep[pp.\ 26]{order_statistics}. This is illustrated in Figure \ref{hprime_density}: the black solid line shows the logistic density \eqref{logistic_pdf}, and the red (blue) dashed lines show the densities of the first and last event times for $n=100$ ($n=10\ 000$). Thus, the densities of the event times $T_k$ and $T_{n-k+1}$ are symmetric around $\log(1/(\psi \lambda'))$.

Moreover, as $T'_{k+1} \stackrel{d}{=} -T'_{n-k}$, this demonstrates that the inter-event times $W_k = T_{k+1} - T_{k} = T'_{k+1} - T'_k$ and $W_{n-k} = T_{n-k+1}-T_{n-k} = T'_{n-k+1}-T'_{n-k}$ are equal in distribution. The density derived in Corollary \ref{L2} is hence that of the gap between the $k$-th and $(k+1)$-th order statistic of $n$ standard logistic random variables. See for instance \citet[pp.\ 84]{logistic_paper}; their equation (4.1) gives the density of the gap between the $k$-th and $(k+1)$-th order statistics for the logistic distribution, which appears in very different form, but becomes the density in Corollary \ref{L2} after some algebra. We are not aware of a simpler expression for this particular density.

\begin{Corollary} \label{L1.1}
The distribution function of the waiting time $W_k$ between events $k$ and $k+1$, $k\in\{1,\dots,n-1\}$, with $\psi\to0$, is given by:
\begin{equation} \label{eq:distWk_psi0}
F^0_{W_k}(w) = 1 - e^{-kw} {}_2F_1(k, k+1; n+1; 1 - e^{-w}).
\end{equation}
\end{Corollary}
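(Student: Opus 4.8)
The plan is to exploit, exactly as elsewhere in the paper, the representation of $X_\psi^\delta$ as a time-rescaled Yule rate 1 process, and to pass to the limit at the level of order statistics rather than by integrating the density of Corollary \ref{L2}. Since $Y^t$ is a pure-death process in which each of the $n$ initial lineages dies independently at rate $1$, its event times $\widetilde T_1 < \cdots < \widetilde T_n$ are precisely the order statistics of $n$ i.i.d.\ $\text{Exp}(1)$ variables $E_1, \dots, E_n$. Applying the strictly increasing rescaling $g_3^{-1}$ of \eqref{delta_formula}, the event times of $X_\psi^\delta$ are $T_k = g_3^{-1}(E_{(k)})$, so that $T_1 < \cdots < T_n$ are the order statistics of the i.i.d.\ death times $D_i = g_3^{-1}(E_i)$, with common distribution function $F_D(d) = 1 - e^{-g_3(d)}$. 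The waiting time is then the consecutive spacing $W_k = T_{k+1} - T_k = D_{(k+1)} - D_{(k)}$ for $k \in \{1, \dots, n-1\}$.

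Next I would take $\psi \to 0$. Writing $g_3^{-1}(t) = \log(1/(\psi\lambda')) + \log(\psi\lambda' + e^t - 1)$, for each fixed $t>0$ we have $g_3^{-1}(t) + \log(\psi\lambda') \to \log(e^t - 1)$ as $\psi \to 0$. Because $W_k$ is a difference of two order statistics, it is invariant under the common shift $\log(1/(\psi\lambda'))$, so $W_k$ converges in distribution to the spacing $L_{(k+1)} - L_{(k)}$ of the order statistics of $n$ i.i.d.\ copies of $L = \log(e^E - 1)$ with $E \sim \text{Exp}(1)$. A direct change of variables shows $P(L \le \ell) = e^\ell/(1+e^\ell)$, i.e.\ $L$ is standard logistic, with density $f(\ell) = F(\ell)\bigl(1 - F(\ell)\bigr)$. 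This recentring is exactly the two-timescale shift identified in Section \ref{X_psi_Y}, and it makes transparent why $k=0$ (the first event, whose time diverges) must be excluded.

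Finally I would compute the survival function $1 - F^{(k)}(w) = P(W_k > w)$ from the joint density of the consecutive order statistics $L_{(k)}, L_{(k+1)}$,
\[
f_{(k),(k+1)}(x,y) = \frac{n!}{(k-1)!\,(n-k-1)!}\, F(x)^{k-1}\bigl(1 - F(y)\bigr)^{n-k-1} f(x) f(y), \qquad x<y.
\]
Integrating out $y$ over $(x+w,\infty)$ produces a factor $(1-F(x+w))^{n-k}/(n-k)$, and substituting $p = F(x)$ together with the logistic identity $1 - F(x+w) = e^{-w}(1-p)/\bigl(e^{-w} + p(1 - e^{-w})\bigr)$ reduces everything to the Beta-type integral $\int_0^1 q^{n-k}(1-q)^{k-1}\bigl(1 - (1-e^{-w})q\bigr)^{-(n-k)}\,dq$. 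I would recognise this as Euler's integral representation of ${}_2F_1(n-k, n-k+1; n+1; 1 - e^{-w})$; after the Gamma factors cancel against the order-statistic prefactor, this gives $P(W_k > w) = e^{-(n-k)w}\,{}_2F_1(n-k, n-k+1; n+1; 1-e^{-w})$, and a single application of Euler's transformation ${}_2F_1(a,b;c;z) = (1-z)^{c-a-b}{}_2F_1(c-a,c-b;c;z)$ converts this into $e^{-kw}\,{}_2F_1(k, k+1; n+1; 1-e^{-w})$, yielding \eqref{eq:distWk_psi0}.

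The main obstacle is the bookkeeping in this last step: correctly matching the parameters $(a,b,c,z)$ in Euler's integral representation and then applying the Euler transformation to reconcile the apparent asymmetry between $k$ and $n-k$ and bring the answer into the stated form. As a consistency check one can instead differentiate the claimed $F^{(k)}$ and verify it reproduces the density of Corollary \ref{L2}; this route is equivalent but forces a contiguous-function identity for ${}_2F_1$, which is exactly what the order-statistic computation sidesteps.
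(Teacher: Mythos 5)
Your proposal is correct, but it takes a genuinely different route from the paper. The paper proves this corollary by integrating the limiting density of Corollary \ref{L2} (after first applying Euler's transformation to write it as $\tfrac{k(n-k)}{n+1}e^{-kw}\,{}_2F_1(k+1,k+1;n+2;1-e^{-w})$), substituting $z=1-e^{-u}$ and invoking an antiderivative identity for ${}_2F_1$ from Erd\'elyi; so its logical path runs through Theorem \ref{L1} and Corollary \ref{L2}. You instead bypass both: you use the pure-death structure of $Y^t$ to write the event times of $X_\psi^\delta$ as $g_3^{-1}$ applied to the order statistics of $n$ i.i.d.\ $\mathrm{Exp}(1)$ variables, observe that after the common recentring by $\log(1/(\psi\lambda'))$ the death times converge a.s.\ to i.i.d.\ standard logistic variables (and that spacings are shift-invariant), and then compute $P(W_k>w)$ directly from the joint density of consecutive logistic order statistics via Euler's integral representation plus one Euler transformation. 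I checked the computation: the Beta-type integral matches ${}_2F_1$ with $(a,b,c)=(n-k+1,n-k,n+1)$, the Gamma factors cancel the order-statistic prefactor $n!/((k-1)!(n-k)!)$, and the exponents combine as $e^{-(n-k)w}e^{-(2k-n)w}=e^{-kw}$, giving exactly \eqref{eq:distWk_psi0}. What your route buys: it produces the distribution function directly (no contiguous-function or antiderivative identities), it identifies the limiting genealogy as spacings of logistic order statistics---which makes the two-timescale picture of Section \ref{X_psi_Y} concrete and renders the symmetry $f^{(k)}=f^{(n-k)}$ noted after the corollary immediate from the symmetry of the logistic density about $0$---and it explains transparently why $k=0$ is excluded. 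What the paper's route buys: it stays at the level of exact finite-$\psi$ formulas until the last step, so the corollary is a two-line integration once Theorem \ref{L1} is in hand, and it does not require the (mild, but necessary) justification that the weak limit of the laws of $W_k$ is the distribution asserted in the statement; in your write-up you should make that final identification explicit, i.e.\ that a.s.\ convergence of the recentred order statistics gives convergence of the distribution functions at all points, the limit law being continuous.
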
 

 Another interesting property of this distribution is that it does not depend on the scaled birth rate $\lambda' = \frac{\lambda}{\lambda - \mu}$, as this parameter only appears as a factor in $\psi \lambda'$. In particular, we can take $\lambda' = 1 \implies \mu' = 0$. Thus, the inter-event times for the RRP $X_\psi^\delta$ have the same distributions as those of an incompletely sampled time-reversed Yule rate 1 process, in the limit $\psi \to 0$.

\subsection{Time to origin} \label{t_or_section}

We now consider the distribution of shifted time to origin $T_n'  = T_n -  \log(1/(\psi \lambda'))$ in the limit $\psi \to 0$. Integrating the density in \eqref{f_tkprime} for $k=n$, the distribution function of $T_n'$ is given by
\[
F^{0}_{T_n'} (\tau') = (1 + e^{-\tau'})^{-n}.
\]
As $n$ grows, the density of time to origin shifts to the right away from $\log(1/(\psi\lambda'))$, so with high probability $T_n'$ is much larger than 0. Figure \ref{hprime_density} demonstrates this visually with examples of the density of $T_n'$ for $n=100$ and $n=10\ 000$. Thus, for $n$ large enough, this justifies introducing the approximation $1 + e^{-\tau'} \approx \exp(e^{-\tau'})$, so the distribution of shifted time to origin can be approximated by
\[
\widetilde{F}^{0}_{T_n'} (\tau') = \left[ \exp(e^{-\tau'}) \right]^{-n} = \exp(-e^{-(\tau' - \log n)}).
\]
This is a Gumbel distribution with location parameter (mode) $\log n$ and scale parameter 1. Figure \ref{hprime_density} shows that this approximation provides a good fit, for $n=100$ and $n=10\ 000$.

This links to the results of \citet{burden}, who consider the diffusion limit (as the population size grows to infinity) of a near-critical Bienaym\'e-Galton-Watson process. \citet[Section 6]{burden} calculate numerically and plot the distribution of time to the MRCA, similarly shifted by the log of the population size at the time of sampling, and comment that as $n \to \infty$ this appears to converge to what looks like a Gumbel distribution. We have shown, analytically, that in the case of a supercritical birth-death process in the limit as $\psi \to 0$, the time to origin shifted by $\log(1/(\psi \lambda'))$ also converges to a Gumbel distribution, and in this case the location parameter depends on $n$.


\subsection{Exponential approximation of inter-event times}
Although Corollary \ref{L1.1} completely solves the question of what is the distribution of $W_k$ as $\psi\to 0$, the appearance of ${}_2F_1$ in \eqref{eq:distWk_psi0} somewhat obscures our insight into $W_k$. Here we show that these waiting times are well approximated by exponential distributions, so that the process is `almost' Markov.

Consider an exponential approximation to $f^{0}_{W_k}(w)$ with rate $k(n-k)/n$:
\begin{align} \label{exp_wait}
\widetilde{f}^{0}_{W_k} (w) &= \frac{k(n-k)}{n} \exp(-\frac{k(n-k)}{n} w), & \widetilde{F}^{0}_{W_k}(w) &= 1 - \exp(-\frac{k(n-k)}{n} w),
\end{align}
for $k \in \{ 1, \hdots, n-1 \}$. We have the following result concerning the accuracy of this approximation:
\begin{Proposition} \label{L3}
Suppose the waiting time distribution $W_k$, with distribution function \eqref{eq:distWk_psi0} for $\psi\to 0$, is approximated by an exponential distribution \eqref{exp_wait}. Then the approximation error is bounded, uniformly in $k$, in terms of Kolmogorov-Smirnov distance:
\[
\sup_w \abs{ F^{0}_{W_k}(w) - \widetilde{F}^{0}_{W_k}(w)} < \frac{1}{n}.
\]
\end{Proposition}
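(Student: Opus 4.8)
The plan is to work with the survival functions. Writing $S^{(k)}(w)=1-F^{(k)}(w)=e^{-kw}\,{}_2F_1(k,k+1;n+1;1-e^{-w})$ and $\widetilde S^{(k)}(w)=1-\widetilde F^{(k)}(w)=e^{-rw}$ with $r:=k(n-k)/n$, the quantity to bound is $\sup_w|S^{(k)}(w)-\widetilde S^{(k)}(w)|$. The first step is to replace the opaque ${}_2F_1$ by a probabilistic representation. Applying Euler's integral formula to ${}_2F_1(k,k+1;n+1;1-e^{-w})$ (valid since $n+1>k+1>0$) and simplifying gives
\[
S^{(k)}(w)=\mathbb{E}_q\Big[\big(1+q(e^{w}-1)\big)^{-k}\Big],\qquad q\sim\mathrm{Beta}(n-k,k+1).
\]
This is exactly the representation one reads off from the order-statistics picture behind \eqref{fkw}: conditionally on the $(n-k)$-th order statistic of the underlying uniforms equalling $q$, the next spacing is governed by the minimum of $k$ further uniforms on $(q,1)$, and $q\sim\mathrm{Beta}(n-k,k+1)$ is the marginal law of that order statistic. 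I would record the conditional hazard $kqe^{w}/(1-q+qe^{w})$, which increases from $kq$ to $k$, since it shows transparently why a single exponential rate can only ever be an approximation.

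Next I would extract the structural features that localise the supremum. Both $S^{(k)}$ and $\widetilde S^{(k)}$ equal $1$ at $w=0$ and $0$ at $w=\infty$; moreover $\mathbb{E}[W_k]=\int_0^{\infty}S^{(k)}(w)\,dw$ evaluates, from the representation above, to $n/(k(n-k))=1/r$, so $\int_0^{\infty}(S^{(k)}-\widetilde S^{(k)})\,dw=0$ and $G:=S^{(k)}-\widetilde S^{(k)}$ cannot have constant sign. A direct computation gives $G'(0)=\widetilde f^{(k)}(0)-f^{(k)}(0)=r-\tfrac{k(n-k)}{n+1}=\frac{k(n-k)}{n(n+1)}>0$, so $G$ starts positive. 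I would then argue that the density difference $\widetilde f^{(k)}-f^{(k)}$ changes sign exactly twice (a variation-diminishing property of the Beta-mixture defining $S^{(k)}$), so that $G$ is positive on an initial interval, negative thereafter, and $\sup_w|G|$ is attained at one of the two interior points where $f^{(k)}=\widetilde f^{(k)}$. Even without pinning these points down, the representation lets me bound $|G(w)|$ pointwise.

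The main obstacle is this uniform (in both $w$ and $k$) pointwise bound. My approach is to split off the Jensen gap by writing $G=\big(S^{(k)}-\phi(\bar q)\big)+\big(\phi(\bar q)-\widetilde S^{(k)}\big)$, where $\phi(q)=(1+q(e^w-1))^{-k}$ and $\bar q=\mathbb{E}[q]=\frac{n-k}{n+1}$. Since $\phi$ is convex in $q$, the first bracket is the nonnegative convexity gap, which I would control by $\tfrac12\sup_q\phi''\cdot\mathrm{Var}(q)$ using $\mathrm{Var}(q)=\frac{(n-k)(k+1)}{(n+1)^2(n+2)}\le\frac{1}{4(n+2)}$; the second bracket I would bound through the elementary logarithmic inequality comparing $k\log(1+\bar q(e^w-1))$ with $rw$. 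The delicate point is that $\sup_q\phi''$ grows like $(e^w-1)^2$ and so is not uniformly small: for large $w$ the convexity estimate degrades, and I must instead exploit that both $S^{(k)}$ and $\widetilde S^{(k)}$ are themselves $O(1/n)$ once $w$ exceeds a threshold of order $(\log n)/r$. I therefore expect to split the range of $w$ at such a threshold, using the variance bound on the bulk and the decay of the survival functions on the tail, and to choose the threshold so that both contributions come out below $1/n$; the $k\leftrightarrow n-k$ symmetry $f^{(k)}=f^{(n-k)}$ (which follows from the Euler transformation of ${}_2F_1$ and is already noted in the paper) halves the cases to check.
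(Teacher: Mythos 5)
Your Beta-mixture representation $S^{(k)}(w)=\mathbb{E}_q\big[(1+q(e^w-1))^{-k}\big]$ with $q\sim\mathrm{Beta}(n-k,k+1)$ is correct (it follows from Euler's integral for ${}_2F_1$ after the substitution $u=1-t$), and your structural observations (equal means, $G'(0)=\frac{k(n-k)}{n(n+1)}>0$) check out. But the bounding strategy has a genuine hole in the intermediate range of $w$, and that is exactly where the difficulty of the proposition lives. Your bulk bound uses $\sup_q\phi''=\phi''(0)=k(k+1)(e^w-1)^2$, giving a convexity-gap estimate of order $k(k+1)(e^w-1)^2/(8(n+2))$. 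Take $k\asymp n/2$, so $r=k(n-k)/n\asymp n/4$. At $w=c/r$ this estimate is $\approx c^2/(2n)$, which already exceeds the target $1/n$ once $c>\sqrt{2}$; meanwhile at that same $w$ both survival functions equal roughly $e^{-c}$, which is nowhere near $1/n$ until $w\approx(\log n)/r$. So on the interval $[\sqrt{2}/r,\,(\log n)/r]$ neither of your two bounds applies, and no choice of threshold makes the two regimes meet: the crude Jensen bound fails before the tail bound starts working. Closing this gap would require replacing $\sup_q\phi''$ by $\phi''$ evaluated on a concentration set around $\bar q$ (a Laplace-type local analysis of the Beta mixture, plus a separate treatment of the Beta tails), which is a substantial missing piece rather than a routine detail. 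A secondary problem is that the statement demands the strict constant $1/n$, not $O(1/n)$: your error is a sum of at least three terms (convexity gap, the $\phi(\bar q)$-versus-$e^{-rw}$ comparison, tail contributions), and nothing in the plan controls the sum of the constants uniformly in $k$ and $w$; the two-sign-change claim you invoke to localise the supremum is also only asserted (``variation-diminishing''), not proved.

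For contrast, the paper's proof avoids any decomposition and any regime-splitting. It factors $e^{-kw}=\exp(-\tfrac{k(n-k)}{n}w)\exp(-\tfrac{k^2}{n}w)$, so that $\big|F^{(k)}(w)-\widetilde F^{(k)}(w)\big|=\exp(-\tfrac{k(n-k)}{n}w)\,\abs{h(w)-1}$ with $h(w)=\exp(-\tfrac{k^2}{n}w)\,{}_2F_1(k,k+1;n+1;1-e^{-w})$; it then bounds $h(w)-1\le w\,h'(0)=\frac{k(n-k)}{n(n+1)}w$ by the mean value theorem (after checking that $h'$ is maximised at $0$), and maximises $\exp(-\tfrac{k(n-k)}{n}w)\cdot\frac{k(n-k)}{n(n+1)}w$ over $w$ in closed form at $\hat w=n/(k(n-k))$, yielding $\frac{1}{e(n+1)}<\frac1n$ --- a single term, valid for all $w$ simultaneously, with the constant slack built in. If you want to salvage your route, the honest comparison is that your representation gives good probabilistic intuition (the conditional hazard increasing from $kq$ to $k$ explains why a single exponential rate is only approximate), but the quantitative work still has to be done, and the paper's factorisation does it in two lines.
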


The density derived in Corollary \ref{L2} is nonintuitive, however this result shows that up to an error bounded by $1/n$, the distribution is actually approximately exponential. Note that the particular form of the exponential rate is such that $\widetilde f^{0}_{W_k}(w) = \widetilde f^{0}_{W_{n-k}}(w)$, so the symmetry between the $k$-th and $(n-k)$-th inter-event times is preserved in the approximation. Figure \ref{interevent_plot} shows an example of the (exact) density \eqref{Wk}, for $\psi=1$ on the left and very small $\psi$ on the right; dotted lines in the latter case show the exponential approximations \eqref{exp_wait}, demonstrating good agreement for $n=100$.

\begin{figure}[htbp!]
\centering
\includegraphics[width=0.9\textwidth]{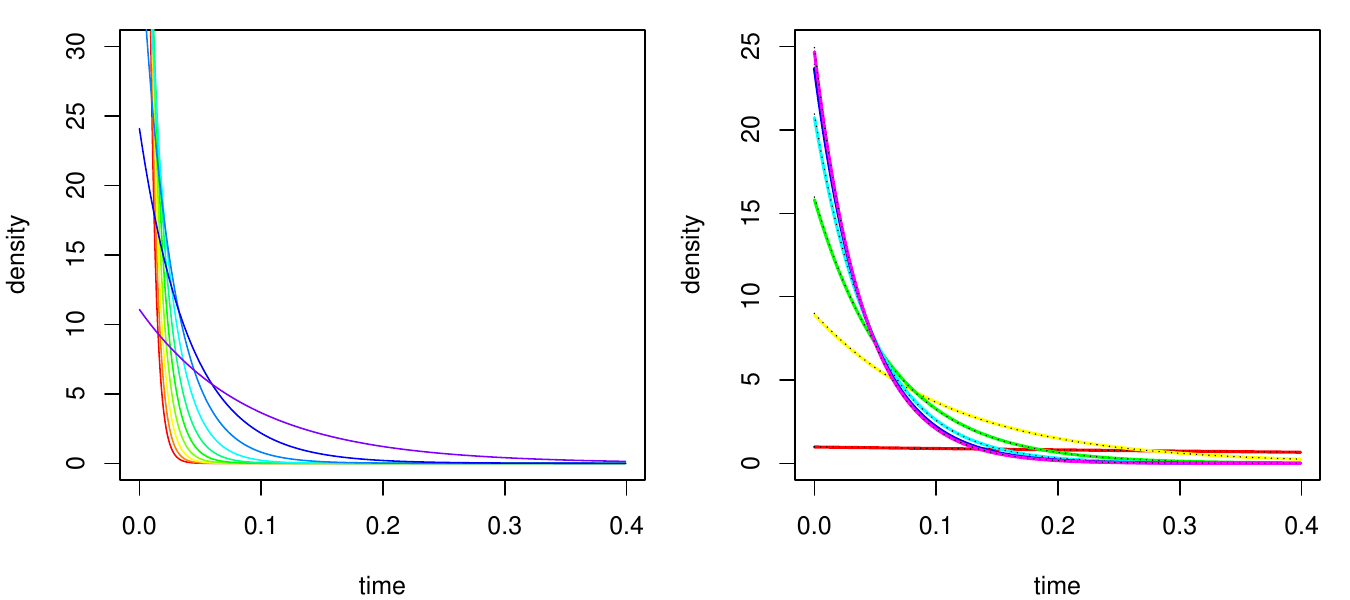}
\caption{ Inter-event time density, $n = 100$, $\lambda'=2$, $\mu'=1$. Left: with $\psi=1$, colours (red to purple) correspond to event numbers $k=0, 10, \hdots, 90$. Right: with $\psi = e^{-20}$, colours (red to purple) correspond to event numbers $k=1, 10, 20, \hdots, 50$; dotted lines show exponential approximation \eqref{exp_wait}.} \label{interevent_plot}
\end{figure}




\citet{wiuf_bd} gives results for the expectation of time to origin, and recursions for calculating the expectation of the other event times, for the RRP with Bernoulli sampling (not in the limit $\psi \to 0$). We can use these results to show that the expectation under the exponential approximation, being $n/(k(n-k))$, is \emph{exact} in the limit $\psi \to 0$ (for any $n$).

\begin{Proposition} \label{L4} The expectation of time to origin for $\psi \to 0$ is given by:
\begin{equation} \label{exp_t_or}
\mathbb{E}(T_{n}) = \log(\frac{1}{\psi \lambda'}) + \sum_{j=1}^{n-1} \frac{1}{j} + \mathcal{O}(\psi).
\end{equation}
\end{Proposition}
This is an illuminating result, as the expectation is split into two parts. The first is $\log(1/(\psi\lambda'))$, corresponding to the first time rescaling regime, as described in Section \ref{X_psi_Y}. Near 0, a small step in $t$ is equivalent to a step of order $\log(1/(\psi\lambda'))$ in units of $\delta$. The second part is equivalent to the expectation of a sum of $n-1$ exponential waiting times with rate being the remaining number of lineages, corresponding to the second time rescaling regime, which is approximately linear.  

This result agrees with the discussion in Section \ref{interevent}: recall that in the limit $\psi \to 0$, the shifted event time $T_k'$ is distributed as the $k$-th order statistic of $n$ standard logistic random variables, so $T_k = T_k' + \log(1/(\psi \lambda'))$ has expectation 
\begin{equation} \label{tk_expectation}
\sum_{j=1}^{k-1} \frac{1}{j} - \sum_{j=1}^{n-k} \frac{1}{j} + \log(\frac{1}{\psi \lambda'}),
\end{equation}
obtained by simplifying equation (4.8.6) in \citet[p.82]{order_statistics}. Setting $k=n$, this becomes \eqref{exp_t_or} up to the $\mathcal{O}(\psi)$ term. Notice also that using the Gumbel approximation for large $n$, as described in Section \ref{t_or_section}, gives the expectation of $T_n'$ as $\log n + \widetilde \gamma$ (where $\widetilde \gamma$ is the Euler-Mascheroni constant). This is the limit of the harmonic sum in \eqref{exp_t_or} as $n \to \infty$, so the expectations agree in this limit. 

\citet[Appendix D]{wiuf_bd} derives a recursion for the expectations of event times, which in our notation becomes: 
\begin{equation} \label{wiuf_recursion}
\mathbb{E}_n (T_{k}) = \frac{n}{n-k} \mathbb{E}_{n-1} (T_{k}) - \frac{k}{n-k} \mathbb{E}_n (T_{k+1}),
\end{equation}
 where $\mathbb{E}_n(T_k)$ denotes the expectation of the $k$-th event time if the sample is of size $n$ at time 0.  Using this and the expression for time to origin given by Proposition \ref{L4},  we obtain the following result:
\begin{Proposition} \label{Prop_Wk}
The expectation of waiting times between events is given by:
\begin{equation*} \label{wiuf_wait_time}
\mathbb{E}(W_k) = \mathbb{E}(T_{k+1}) - \mathbb{E}(T_{k}) = \frac{n}{k(n-k)} + \mathcal{O}(\psi).
\end{equation*}
\end{Proposition}
\noindent This agrees exactly with the expectation using the exponential approximation for $\psi \to 0$.  This also agrees, up to the $\mathcal{O}(\psi)$ term, with the expectation of $T_{k+1} - T_k$ obtained using \eqref{tk_expectation} in the limit $\psi \to 0$.


\section{Discussion} \label{discussion_section}

In this paper, we have demonstrated that viewing the RRP as an inhomogeneous pure-death process allows for relatively simple and intuitive derivations of its properties. The time rescaling approach allows for results derived for completely sampled RRPs to be transformed to those for incomplete sampling, using a simple change of variables, with no restrictions on the parameter values. Moreover, the time rescaling between the  time-reversed Yule rate 1 process  and the RRP can be used to simulate the RRP in a straight forward way, by simulating each event time sequentially. 

In the limit $\psi \to 0$, this rescaling can be decomposed into two timescales. The RRP tree becomes star-shaped, with terminal branch lengths tending to infinity, but inter-event times at the top of the tree are approximately exponential with a rate depending on $n$ and the event number. This has interesting implications for data analysis, as it suggests that the number of singleton mutations in a small sample from a very large population tends to infinity, but the number of shared mutations does not. Indeed, the recent paper of \citet{lambert_new} considers the expected frequency spectrum of mutations using a birth-death model with the infinite sites assumption. Although this is not explicitly discussed, the results of the simulations show that for small values of $\psi$, the expected number of singletons is orders of magnitude larger than that of mutations shared by multiple individuals. Taking the limit as $\psi \to 0$ in their equation (8), the expected number of singletons for $X_\psi^\delta$ grows to infinity, while for $k>1$
\[
\mathbb{E}\big[ S_n(k) \big] = \theta \, \frac{n+k-1}{k(k-1)},
\]
where $\theta$ is the mutation rate and $S_n(k)$ is the number of mutations with multiplicity $k$ in the sample of size $n$. In applying their method to cancer data, \citet{lambert_new} consider small values of $\psi$ with the population size being very large compared to the sample size---our results presented in Section \ref{psi_0_section} provide an insight into the properties of the genealogy in this case.

As can be seen from our results and related work, properties of the genealogy of a sample obtained from a population following a birth-death process are notably different from those arising under the coalescent,  particularly when the sample size is close to being of the same order as the population size.  The coalescent is widely used in statistical inference for intra-host viral and bacterial populations \citep[e.g.][]{dia:etal:2016}.   However, the choice of model should be appropriate to the relative scale of the biological application, and the individual-level population dynamics are arguably likely to be better modelled by a birth-death process. When considering the scenario of a small sample obtained from a very large population, the differences between the coalescent and the small-$\psi$ limit of the birth-death model will carry through to the resulting inference.  An important question is thus whether, for samples of viral or bacterial genetic sequencing data, birth-death models can provide better inference on the evolutionary dynamics of such populations. Answering this would require development of new methods for statistical inference that condition on the data and incorporate the natural processes governing such populations, such as high rates of mutation, recombination, and rapid demographic changes. This also presents interesting challenges in making full use of the increasingly rich sequencing data available for viral and bacterial infections.

\section*{Acknowledgements}
We thank two anonymous referees for their helpful comments. This work was supported by the OxWaSP CDT under the EPSRC grant EP/L016710/1, and by the Alan Turing Institute under the EPSRC grant EP/N510129/1.

\bibliography{Source_files_literature}{}
\bibliographystyle{Source_files_statsy}

\appendix

\section{Proofs} \label{proofs_appendix}


\subsection{Proof of Theorem \ref{L1}}

\begin{proof}
In the  time-reversed  Yule rate 1 process, the density of waiting times between the $k$-th and $k+1$-th event, $k = 0, \hdots, n-1$, conditional on $T_k = s$ is:
\begin{equation}
f_{W_k}(t | s) = (n-k)e^{-(n-k)((s+t)-s)} = (n-k)e^{-(n-k)t}. \label{yulefk}
\end{equation}
Using the time transformation \eqref{delta_trf}, in units of $\delta$ the waiting time is:
\[
w = \rho_\delta^{-1}(s+t) - \rho_\delta^{-1}(s) = \log(\frac{\psi \lambda' + e^{s+t}-1}{\psi \lambda' + e^s-1}).
\]
Rearranging, this gives:
\[
t = \log(e^{w} \left[ 1 - (1 - \psi \lambda')e^{-s} \right] + (1 - \psi \lambda') e^{-s}),
\]
and
\[
\frac{dt}{dw} = \frac{e^{w}(1 - (1 - \psi \lambda') e^{-s})}{e^{w}(1 - (1 - \psi \lambda')e^{-s}) + (1 - \psi \lambda')e^{-s}}.
\]
Thus, by using a change of variables in \eqref{yulefk} and writing $\phi = 1 - \psi \lambda'$:
\begin{align*}
f^\psi_{W_k} (w | s) &= (n-k) e^{w}  \big[1 - (1 - \psi \lambda')e^{-s} \big] \big[ e^{w}(1 - (1 - \psi \lambda')e^{-s}) + (1 - \psi \lambda')e^{-s}\big]^{-(n-k+1)} \\
&= (n-k) e^{w}  \big[1 - \phi e^{-s} \big] \big[ e^{w}(1 - \phi e^{-s}) + \phi e^{-s}\big]^{-(n-k+1)}.
\end{align*}
Since $s$ is the time of the $k$-th event in the  time-reversed  Yule rate 1 process, it has density given by \eqref{pd1_kthevent}:
\[
f_{T_k}(s) = {n \choose k-1} (n - k + 1) \big(1 - e^{-s} \big)^{k-1} \big(e^{-s} \big)^{n-k+1}.
\]
The marginal distribution of $W_k$ is thus
\begin{align*}
f^\psi_{W_k}(w) &= \int_0^\infty f^\psi_{W_k} (w | s) \, f_{T_k}(s) ds \\ 
&= {n \choose k-1} (n - k + 1) (n-k) \underbrace{\int_0^\infty e^{w} \frac{(1 - \phi e^{-s}) (1 - e^{-s})^{k-1} (e^{-s})^{n-k+1}}{ \big( e^{w}(1 - \phi e^{-s}) + \phi e^{-s}\big)^{n-k+1}} ds}_{(*)}.
\end{align*}
Integrating using the change of variables $u = e^{-s}$:
\begin{align*}
(*) &= e^{w} \int_0^1 \frac{(1 - \phi u)(1 - u)^{k-1} u^{n-k}}{ \big( e^{w}(1-\phi u) + \phi u\big)^{n-k+1}} du \nonumber \\
&= e^{-(n-k)w} \int_0^1 \frac{(1 - \phi u)(1 - u)^{k-1} u^{n-k}}{ \big( 1 - \phi u (1-e^{-w})\big)^{n-k+1}} du \\
&= e^{-(n-k)w} \int_0^1 \Bigg[ \frac{(1 - u)^{k-1} u^{n-k}}{ \big( 1 - \phi u (1-e^{-w})\big)^{n-k+1}} - \phi \frac{(1 - u)^{k-1} u^{n-k+1}}{ \big( 1 - \phi u (1-e^{-w})\big)^{n-k+1}} \Bigg] du.
\end{align*} 
Using the following identity for the ordinary hypergeometric function \citep[p.558]{abramowitz_stegun}:
\[
_2F_1(a, b, c, x) = \frac{\Gamma(c)}{\Gamma(c-a) \Gamma(a)} \int_0^1 \frac{(1 - t)^{c-a-1} t^{a-1}}{ (1 - xt)^{b}} dt,
\]
we obtain
\begin{align*}
(*) = e^{-(n-k)w} \frac{(k-1)!(n-k)!}{(n+1)!} \Big[&(n+1) {}_2F_1(n-k+1, n-k+1; n+1; \phi(1-e^{-w})) \\
&{}- \phi (n-k+1) {}_2F_1(n-k+1, n-k+2; n+2; \phi(1-e^{-w})) \Big].
\end{align*}
Thus,
\begin{multline*} \label{fkt}
f^\psi_{W_k}(w) =  \frac{(n-k)}{(n+1)} e^{-(n-k)w}  \Big[(n+1) {}_2F_1(n-k+1, n-k+1; n+1; (1 - \psi \lambda')(1-e^{-w})) \\ 
{}- (1 - \psi \lambda') (n-k+1) {}_2F_1(n-k+1, n-k+2; n+2; (1 - \psi \lambda')(1-e^{-w})) \Big].
\end{multline*}

For the RRP of a critical branching process, $Z_\psi^\alpha$, the derivation is very similar. Using instead the time transformation 
\[
v = \rho_\alpha^{-1}(s+t) - \rho_\alpha^{-1}(s) = \frac{1}{\psi\lambda}\Big[ e^{s+t} - 1 - e^s + 1 \Big] =  \frac{1}{\psi\lambda}e^s (e^t-1)
\]
and following the same steps, we obtain
\begin{align*}
\hat{f}^\psi_{W_k}(v) =  & \frac{(n-k+1) (n-k)}{n+1} \psi \lambda \cdot {}_2F_1 (n-k+1, n-k+2; n+2; -\psi \lambda v).
\end{align*}
\end{proof}


\subsection{Proof of Corollary \ref{L2}}

\begin{proof}
Substituting $\psi = 0$ into \eqref{Wk}:
\begin{multline*} \label{fkt}
f^0_{W_k}(w) =  \frac{(n-k)}{(n+1)} e^{-(n-k)w}  \Big[(n+1) {}_2F_1(n-k+1, n-k+1; n+1; 1-e^{-w}) \\ 
{}- (n-k+1) {}_2F_1(n-k+1, n-k+2; n+2; (1-e^{-w})) \Big].
\end{multline*}
 Identity (15.2.16) of \citet[p.\ 558]{abramowitz_stegun} gives:
\begin{equation} \label{f1}
ac(1-z) \;{}_2 F_1 (a+1,b;c;z) = c[a-(c-b)z] \;{}_2 F_1 (a,b;c;z) + (c-a)(c-b)z \;{}_2 F_1 (a,b;c+1;z)
\end{equation}
Substituting $a+1$ instead of $a$ in identity (15.2.20) of \citet[p.\ 558]{abramowitz_stegun} gives:
\begin{equation} \label{f2}
c(1-z) \;{}_2 F_1 (a+1,b;c;z) = c  \;{}_2 F_1 (a, b; c ;z)  - (c-b)z  \;{}_2 F_1 (a+1,b;c+1;z).
\end{equation}
Multiplying \eqref{f2} by $a$, equating with \eqref{f1} and simplifying gives: 
\[
c \;{}_2 F_1 (a,b; c;z) - a  \;{}_2 F_1 (b, a+1; c+1; z) = (c-a)  \;{}_2 F_1 (a,b; c+1;z).
\]

Thus, we obtain
\begin{equation*}
f^0_{W_k}(w) = \frac{k(n-k)}{(n+1)} e^{-(n-k)w} {}_2F_1(n-k+1, n-k+1; n+2; 1-e^{-w}).
\end{equation*}
\end{proof}


\subsection{Proof of Corollary \ref{L1.1}}

\begin{proof}
By integrating the density in \eqref{fk_forint}:
\begin{align*}
F^0_{W_k}(w) &= \frac{k(n-k)}{n+1} \int_0^w  e^{-ku} {}_2F_1(k+1, k+1; n+2; 1-e^{-u}) du \nonumber \\
&= \frac{k(n-k)}{n+1} \int_0^w e^{-u} e^{-(k-1)u} {}_2F_1(k+1, k+1; n+2; 1-e^{-u}) du \nonumber\\
&= \frac{k(n-k)}{n+1} \int_0^{1 - e^{-w}}  (1-z)^{k-1} {}_2F_1(k+1, k+1; n+2; z) dz \nonumber\\
&= \frac{k(n-k)}{n+1} \Big[ -\frac{(1-z)^{k} (n+1)}{k(n-k)} \;{}_2F_1(k, k+1; n+1; z) \Big]_0^{1 - e^{-w}} \nonumber \\
&= 1 - e^{-kw} {}_2F_1(k, k+1; n+1; 1 - e^{-w}),
\end{align*}
having used the substitution $z = 1 - e^{-u}$, and the identity  \citep[p.102, eq.\ (25) with $n=1$]{erdelyi} 
\begin{equation*}
\int^z (1 - x)^{a-2} {}_2F_1 (a, b, c, x) \; dx = \frac{c-1}{(a-1)(b-c+1)}(1 - z)^{a-1} {}_2F_1 (a-1, b, c-1, z).
\end{equation*}
\end{proof}


\subsection{Proof of Proposition \ref{L3}}

\begin{proof}
Noting that 
\[
e^{-kw} = \exp(-\frac{k(n-k)}{n} w) \cdot \exp(-\frac{k^2}{n} w),
\]
we have:
\begin{align}
\big| \widetilde{F}^0_{W_k} (w) - F^0_{W_k}(w) \big| &= \Bigg| 1 - e^{-kw} {}_2F_1(k, k+1; n+1; 1 - e^{-w}) - 1 + \exp(-\frac{k(n-k)}{n} w) \Bigg| \nonumber \\
&=  \exp(-\frac{k(n-k)}{n} w) \cdot \Bigg|  \underbrace{  \exp(-\frac{k^2}{n} w) {}_2F_1(k, k+1; n+1; 1 - e^{-w})}_{=:\,h(w)} - 1 \Bigg| . \label{bound}
\end{align}
We need to obtain an upper bound on the maximum of this distance. The first exponential term decays rapidly to 0, while $h(0)=1$ and $h$ initially increases; the global maximum of $h$ occurs near $w=0$, where $h(w)-1 \geq 0$. We first obtain an upper bound on $h(w)-1$, and then use this to obtain an upper bound on \eqref{bound}. Using the mean value theorem (or, equivalently, Taylor's theorem to first order):
\begin{equation*} \label{g}
h(w) = h(0) + w h'(c) = 1 + w h'(c)
\end{equation*}
for some $c \in (0, w)$, with
\begin{align*}
h'(c) = &{}-\frac{k^2}{n}\exp(-\frac{k^2}{n} c) {}_2F_1(k, k+1; n+1; 1 - e^{-c}) \nonumber \\
&{}+ \exp(-\frac{k^2 + n}{n} c).\frac{k(k+1)}{n+1} {}_2F_1(k+1, k+2; n+2; 1 - e^{-c}).
\end{align*}
Differentiating again and considering the sign of the second derivative, we find that $h''(0)<0$, so $h'$ has a maximum at $c=0$; $h'$ has no other extrema before it reaches 0. We have:
\begin{align*}
h'(0) = -\frac{k^2}{n} + \frac{k(k+1)}{n+1} = \frac{k(n-k)}{n(n+1)},
\end{align*}
so an upper bound on $h(w)-1$ is given by 
\[
h(w)-1 \leq \frac{k(n-k)}{n(n+1)} w.
\]
Substituting this into \eqref{bound}:
\begin{align}
\big| \widetilde{F}^0_{W_k} (w) - F^0_{W_k}(w) \big| &\leq  \exp(-\frac{k(n-k)}{n} w) \cdot \big( h(w) - 1 \big) \nonumber \\
& \leq \exp(-\frac{k(n-k)}{n} w) \cdot \frac{k(n-k)}{n(n+1)} w. \label{a17}
\end{align}
This attains the maximum at $\hat{w} = \frac{n}{k(n-k)}$. Substituting this into \eqref{a17}, we obtain the bound:
\[
\big| \widetilde{F}^0_{W_k} (w) - F^0_{W_k}(w) \big| \leq \frac{1}{e(n+1)} < \frac{1}{n}.
\]
The approximation error is thus bounded by $\frac{1}{n}$.
\end{proof}


\subsection{Proof of Proposition \ref{L4}}

\begin{proof}
\citet[Appendix F]{wiuf_bd} derives an expression for the expectation of time to origin, which in our notation is:
\begin{equation}
\mathbb{E}(T_n) = \log(\frac{1}{\psi \lambda'}) + \sum_{i=1}^n \frac{1}{i}  - \sum_{i=1}^n \frac{1}{i}\frac{1}{\Big(1 - \frac{1}{\psi \lambda'}\Big)^{n-i}} - \frac{1}{(1 - \frac{1}{\psi \lambda'})^n} \log(\frac{1}{\psi \lambda'}). \label{eq:wiufF}
\end{equation}
The third term is:
\[
 \sum_{i=1}^n \frac{1}{i}\frac{1}{\Big(1 - \frac{1}{\psi \lambda'}\Big)^{n-i}} = \frac{1}{n} + \frac{1}{n-1} \frac{1}{1 - \frac{1}{\psi \lambda'}} + \frac{1}{n-2} \Bigg( \frac{1}{1 - \frac{1}{\psi \lambda'}}\Bigg)^2 + \hdots = \frac{1}{n} + \mathcal{O}(\psi).
\]
 The fourth term in \eqref{eq:wiufF} is:
\begin{align*}
\frac{1}{(1 - \frac{1}{\psi \lambda'})^n} \log(\frac{1}{\psi \lambda'}) &= (- \psi \lambda')^n (1 - \psi \lambda') ^{-n} \log(\frac{1}{\psi \lambda'}) \\
&=  -  (- \psi \lambda')^n [1 + \mathcal{O} (\psi \lambda')] \log(\psi \lambda') \\
&= \mathcal{O} ((\psi \lambda')^n \log(\psi \lambda)),
\end{align*}
which is $\mathcal{O}(\psi)$ for $n>1$.  In the limit $\psi \to 0$, we thus have 
\begin{equation*}
\mathbb{E}(T_n) = \log(\frac{1}{\psi \lambda'}) + \sum_{i=1}^{n-1} \frac{1}{i} + \mathcal{O}(\psi).
\end{equation*}
\end{proof}

\subsection{Proof of Proposition \ref{Prop_Wk}}
\begin{proof}
From Proposition \ref{L4}, the expectation of time to origin for a sample of size $n$ is:
\begin{equation*}
\mathbb{E}_n(T_{n}) = \log(\frac{1}{\psi \lambda'}) + \sum_{j=1}^{n-1} \frac{1}{j}+ \mathcal{O}(\psi),
\end{equation*}
which also implies that, for a sample of size $n-1$,
\begin{equation*}
\mathbb{E}_{n-1}(T_{n-1}) = \log(\frac{1}{\psi \lambda'}) + \sum_{j=1}^{n-2} \frac{1}{j} + \mathcal{O}(\psi).
\end{equation*}
We proceed by induction on the event number $k$, to show that 
\begin{equation} \label{ind}
\mathbb{E}_n(T_k) = \log(\frac{1}{\psi \lambda'}) + \sum_{j=1}^{n-1}\frac{1}{j}  - \sum_{j=k}^{n-1} \frac{n}{j(n-j)} + \mathcal{O}(\psi).
\end{equation}
This holds for event number $k=n-1$, as using \eqref{wiuf_recursion}:
\begin{align*}
\mathbb{E}_n(T_{n-1}) &= n \mathbb{E}_{n-1}(T_{n-1}) - (n-1) \mathbb{E}_n(T_{n}) \\
&= n \left( \log(\frac{1}{\psi \lambda'}) + \sum_{j=1}^{n-1} \frac{1}{j} - \frac{1}{n-1} \right) - (n-1) \left(\log(\frac{1}{\psi \lambda'}) + \sum_{j=1}^{n-1} \frac{1}{j} \right) + \mathcal{O}(\psi)\\
&= \log(\frac{1}{\psi \lambda'}) + \sum_{j=1}^{n-1} \frac{1}{j} - \frac{n}{n-1} + \mathcal{O}(\psi).
\end{align*}
Suppose that \eqref{ind} holds for some $k = n-i$, $i \in \{ 1, \hdots, n-1 \}$:
\begin{equation*}
\mathbb{E}_n(T_{n-i}) = \log(\frac{1}{\psi \lambda'}) + \sum_{j=1}^{n-1}\frac{1}{j}  - \sum_{j=1}^{i} \frac{n}{j(n-j)} + \mathcal{O}(\psi),
\end{equation*}
and so, equivalently, for $n-1$ lineages:
\begin{equation*}
\mathbb{E}_{n-1}(T_{n-i-1}) = \log(\frac{1}{\psi \lambda'}) + \sum_{j=1}^{n-2}\frac{1}{j}  - \sum_{j=1}^{i} \frac{n-1}{j(n-j-1)} + \mathcal{O}(\psi).
\end{equation*}
Then:
\begin{align*}
\mathbb{E}_n(T_{n-i-1}) &= \frac{n}{i+1}\mathbb{E}_{n-1}(T_{n-i-1}) - \frac{n-i-1}{i+1} \mathbb{E}_n(T_{n-i}) \\
&= \underbrace{\log(\frac{1}{\psi \lambda'}) + \sum_{j=1}^{n-1}\frac{1}{j} + \mathcal{O}(\psi)}_{(*)} - \frac{n}{i+1} \left[  \frac{1}{n-1}  + \sum_{j=1}^{i} \frac{n-1}{j(n-j-1)}  - (n-i-1)\sum_{j=1}^{i} \frac{1}{j(n-j)}  \right]  \\
&= (*) - \frac{n}{i+1} \left[  \frac{1}{n-1}  +  \sum_{j=1}^{i} \left( \frac{1}{j}  + \frac{1}{n-j-1} \right) - \frac{(n-i-1)}{n}\sum_{j=1}^{i} \left( \frac{1}{j} + \frac{1}{n-j} \right)  \right] \\
&= (*)  - \frac{1}{i+1} \left[  \frac{n}{n-1}  +  (i+1) \sum_{j=1}^{i}\frac{1}{j}  +  n \sum_{j=1}^{i} \frac{1}{n-j-1}  - (n-i-1)\sum_{j=1}^{i}  \frac{1}{n-j}  \right] \\
&= (*)  - \frac{1}{i+1} \left[   (i+1) \sum_{j=1}^{i}\frac{1}{j}  +  n \sum_{j=2}^{i} \frac{1}{n-j} + \frac{n}{n-i-1} - (n-i-1)\sum_{j=2}^{i}  \frac{1}{n-j}  + \frac{i+1}{n-1} \right]  \\
&= (*)  - \frac{1}{i+1} \left[ (i+1) \sum_{j=1}^{i}\frac{1}{j}  +  (i+1) \sum_{j=1}^{i} \frac{1}{n-j} + \frac{(n-i-1)+(i+1)}{n-i-1} \right]  \\
&=(*)  - \frac{1}{i+1} \left[ (i+1) \sum_{j=1}^{i+1}\frac{1}{j}  +  (i+1) \sum_{j=1}^{i+1} \frac{1}{n-j}\right] \\
 &= \log(\frac{1}{\psi \lambda'}) + \sum_{j=1}^{n-1}\frac{1}{j} - \sum_{j=1}^{i+1} \frac{n}{j(n-j)} + \mathcal{O}(\psi).
\end{align*}
Thus,
\begin{equation*}
\mathbb{E}_n(T_k) = \log(\frac{1}{\psi \lambda'}) + \sum_{j=1}^{n-1}\frac{1}{j}  - \sum_{j=1}^{n-k} \frac{n}{j(n-j)} + \mathcal{O}(\psi) = \log(\frac{1}{\psi \lambda'}) + \sum_{j=1}^{n-1}\frac{1}{j}  - \sum_{j=k}^{n-1} \frac{n}{j(n-j)} + \mathcal{O}(\psi),
\end{equation*}
and so
\[
\mathbb{E}(W_k) = \mathbb{E}(T_{k+1}) -   \mathbb{E}(T_{k}) = \frac{n}{k(n-k)} + \mathcal{O}(\psi).
\]
\end{proof}


\newgeometry{left=2cm, bottom=2cm, top=2cm}
\newpage
\thispagestyle{empty}
\begin{landscape}

\section{Summary of RRPs} \label{appB}

\bgroup
\def\arraystretch{4}
\begin{table}[!htbp]
\centering
\begin{tabular}{| p{3cm} | C || C || C | C | C | C |   }
\hline
RRP & Y & Z_\psi^\alpha & X_1^\beta& X_\psi^\gamma& X_\psi^\delta\\ \hline

Time variable &  t & \alpha = \frac{1}{\lambda \psi}(e^t-1)& \beta = \frac{1}{\lambda - \mu} \log(1 + \frac{\lambda - \mu}{\lambda} (e^t-1))  &  \gamma = \frac{1}{\lambda - \mu} \log (1 + \frac{1}{\psi} (e^{(\lambda - \mu)\beta} - 1)) & \delta = (\lambda - \mu) \gamma \\  

        && t = \log(1 +  \psi \lambda \alpha)& t = \log(1 + \frac{ \lambda}{\lambda - \mu} \Big( e^{(\lambda - \mu)\beta} - 1 \Big)) & \beta =  \frac{1}{\lambda - \mu} \log (1 + \psi (e^{(\lambda - \mu)\gamma} - 1)) & \gamma = \frac{1}{\lambda - \mu} \delta \\ \hline

Corresponding complete process & \text{Yule(1)} & \text{CBP}(\lambda, \psi) & \text{BDP$(\lambda, \mu, 1)$} & \text{BDP$(\lambda, \mu, \psi)$} & \text{BDP$(\lambda', \mu', \psi)$ with } \lambda' = \frac{\lambda}{\lambda - \mu}, \mu' = \frac{\mu}{\lambda - \mu}  \\ \hline                                                 
                                                 
$m$ (death rate of the RRP, per lineage)            &  1 &  \frac{\psi \lambda}{ 1+ \psi \lambda \alpha} & \frac{\lambda e^{(\lambda - \mu)\beta}}{1 + \frac{ \lambda}{\lambda - \mu} (e^{(\lambda - \mu) \beta} - 1)} &  \frac{\psi \lambda e^{(\lambda - \mu)\gamma}}{1 + \frac{\psi \lambda}{\lambda - \mu} (e^{(\lambda - \mu) \gamma} - 1)}  &  \frac{\psi \lambda' e^{\delta}}{1 + \psi \lambda' (e^{\delta} - 1)} \\ \hline

$\rho = \int m $  &  t  &  \log (1 + \psi \lambda \alpha)& \log(1 + \frac{ \lambda}{\lambda - \mu} \Big( e^{(\lambda - \mu)\beta} - 1 \Big)) &  \log(1 + \frac{\psi \lambda}{\lambda - \mu} \Big( e^{(\lambda - \mu)\gamma} - 1 \Big)) &  \log(1 + \psi \lambda' \Big( e^{\delta} - 1 \Big))  \\ \hline

$e^{-\rho}$  & e^{-t} &  \frac{1}{1 + \psi \lambda \alpha}&\frac{1}{1 + \frac{ \lambda}{\lambda - \mu} \Big( e^{(\lambda - \mu)\beta} - 1 \Big)}  & \frac{1}{1 + \frac{\psi \lambda}{\lambda - \mu} \Big( e^{(\lambda - \mu)\gamma} - 1 \Big)} & \frac{1}{1 + \psi \lambda' \Big( e^{\delta} - 1 \Big)} \\ \hline

\end{tabular}
\end{table}

\egroup
\end{landscape}
\restoregeometry

\end{document}